\def\widebreve#1{\mathop{\vbox{\m@th\ialign{##\crcr\noalign{\kern3\p@}%
      \brevefill\crcr\noalign{\kern3\p@\nointerlineskip}%
      $\hfil\displaystyle{#1}\hfil$\crcr}}}\limits}
\def\brevefill{$\m@th \setbox\z@\hbox{$\braceld$}%
  \bracelu\leaders\vrule \@height\ht\z@ \@depth\z@\hfill\braceru$}
\DeclareMathOperator{\Tr}{Tr}
\DeclareMathAlphabet{\mathbsf}{OT1}{cmss}{bx}{n}
\DeclareMathAlphabet{\mathssf}{OT1}{cmss}{m}{sl}
\DeclareMathAlphabet{\mathcsf}{OT1}{cmss}{sbc}{n}
\newcommand{\svv}[1]{\mathbf{#1}}
\DeclareSymbolFont{bsfletters}{OT1}{cmss}{bx}{n}  
\DeclareSymbolFont{ssfletters}{OT1}{cmss}{m}{n}
\DeclareMathSymbol{\bsfGamma}{0}{bsfletters}{'000}
\DeclareMathSymbol{\ssfGamma}{0}{ssfletters}{'000}
\DeclareMathSymbol{\bsfDelta}{0}{bsfletters}{'001}
\DeclareMathSymbol{\ssfDelta}{0}{ssfletters}{'001}
\DeclareMathSymbol{\bsfTheta}{0}{bsfletters}{'002}
\DeclareMathSymbol{\ssfTheta}{0}{ssfletters}{'002}
\DeclareMathSymbol{\bsfLambda}{0}{bsfletters}{'003}
\DeclareMathSymbol{\ssfLambda}{0}{ssfletters}{'003}
\DeclareMathSymbol{\bsfXi}{0}{bsfletters}{'004}
\DeclareMathSymbol{\ssfXi}{0}{ssfletters}{'004}
\DeclareMathSymbol{\bsfPi}{0}{bsfletters}{'005}
\DeclareMathSymbol{\ssfPi}{0}{ssfletters}{'005}
\DeclareMathSymbol{\bsfSigma}{0}{bsfletters}{'006}
\DeclareMathSymbol{\ssfSigma}{0}{ssfletters}{'006}
\DeclareMathSymbol{\bsfUpsilon}{0}{bsfletters}{'007}
\DeclareMathSymbol{\ssfUpsilon}{0}{ssfletters}{'007}
\DeclareMathSymbol{\bsfPhi}{0}{bsfletters}{'010}
\DeclareMathSymbol{\ssfPhi}{0}{ssfletters}{'010}
\DeclareMathSymbol{\bsfPsi}{0}{bsfletters}{'011}
\DeclareMathSymbol{\ssfPsi}{0}{ssfletters}{'011}
\DeclareMathSymbol{\bsfOmega}{0}{bsfletters}{'012}
\DeclareMathSymbol{\ssfOmega}{0}{ssfletters}{'012}
\newtheorem{lemma}{Lemma}
\newtheorem{corollary}{Corollary}
\newtheorem{theorem}{Theorem}
\newtheorem{definition}{Definition}
\newtheorem{remark}{Remark}
\newcommand{\SNR}{\text{$\mathsf{SNR}$}}
\DeclareRobustCommand{\prob}[1][{\rm Pr}]{\ensuremath {{#1}}}
\DeclareRobustCommand{\aNorm}[1][aNorm]{\ensuremath {\|{\bf a}\|}}
\DeclareRobustCommand{\Nt}[1][Nt]{\ensuremath {N_t}}
\DeclareRobustCommand{\alNt}[1][Nt]{\alpha(\Nt)}
\DeclareRobustCommand{\genGam}[1][\beta]{\ensuremath {{#1}}}
\begin{document}
\allowdisplaybreaks
\title{Outage Behavior of Integer Forcing With Random Unitary Pre-Processing}

\author{Elad Domanovitz and Uri Erez
\thanks{The work of E. Domanovitz and U. Erez was supported in part by the Israel Science Foundation under Grant No. 1956/16 and
 by the Heron consortium via the Israel Ministry of Economy
and Industry.}
\thanks{The material in this paper was presented in part at the 2016 IEEE
International Symposium on Information Theory, Barcelona.}
\thanks{E. Domanovitz and U. Erez are with the Department of Electrical Engineering -- Systems, Tel Aviv University, Tel Aviv, Israel (email: domanovi,uri@eng.tau.ac.il).}
}
\maketitle


\begin{abstract}
Integer forcing is an equalization scheme for the multiple-input multiple-output
communication channel that has been demonstrated to allow operating close to capacity
for ``most" channels. In this work, the measure of ``bad" channels is quantified by
considering a compound channel setting where the transmitter communicates over a fixed
channel but knows only its mutual information. The transmitter encodes the data into independent
streams, all taken from the same linear code. The coded streams are transmitted after applying a unitary transformation. At the receiver side, integer-forcing equalization is applied, followed by standard
single-stream decoding.
Considering pre-processing matrices drawn from a random ensemble,
outage corresponds to the event that the target rate exceeds the achievable rate of integer forcing for a given channel matrix.
For the case of the circular unitary ensemble, an explicit universal bound on the outage probability for a given target rate is derived that holds
for any channel in the compound class.
The derived bound depends only on the gap-to-capacity and the number of transmit antennas.
The results are also applied to obtain universal bounds on the gap-to-capacity
of multiple-antenna closed-loop multicast, achievable via linear pre-processed integer forcing.

\end{abstract}

%
\section{Introduction}
%
\label{sec:intro}

The Multiple-Input Multiple-Output (MIMO) Gaussian channel is central to modern communication
and has been extensively studied over the past several decades. Nonetheless, while the capacity limits, under different assumptions on the availability of channel state information, are well understood, the design of low-complexity communication schemes that approach these limits still poses challenges in some scenarios.

For a static channel and  a point-to-point closed-loop setting, capacity may be approached without much difficulty by employing an architecture that decouples coding and modulation.
That is, one may use ``off-the-shelf" codes in conjunction with linear pre- and post-processing based on matrix decompositions.
For instance, one may  use the singular-value decomposition (SVD) to transform the channel into parallel scalar additive white Gaussian noise (AWGN) channels, over which
standard codes may be employed\cite{telatar}. Alternatively, standard scalar codes may be  used in conjunction with the $QR$  decomposition and successive interference cancellation (SIC), see, e.g., \cite{tse2005fundamentals}.
Coding for MIMO channels in an ergodic fading environment is more involved but has also been successfully addressed. See, e.g., \cite{AchievingNearCapacityOnAmultipleAntennaChannel:HochwaldtenBrink2003}.

In contrast, we address the problem of coding over a compound MIMO channel.
More specifically, the focus of this paper is on static (and frequency-flat)  MIMO channels where the transmitter only knows (or may only utilize its knowledge of) the mutual information of the channel.


The design of a practical coding scheme for such a compound MIMO channel scenario was addressed in \cite{OrdentlichErez:IFUniversallyAchievesCapacityUpToGap:2013} where an architecture employing
space-time linear pre-processing (that is independent of the channel) at the transmitter side and integer-forcing (IF) equalization at the receiver side was proposed.
It was shown that such an architecture \emph{universally} achieves the MIMO capacity up to a constant gap, provided the space-time pre-processing satisfies the non-vanishing determinant (NVD) criterion \cite{ExplicitSpaceTimeCodesAchievingtheDiversity:Elia2006}.
While this result is encouraging as it points to the robustness of the IF scheme, the derived gap is very large and calls for further work.

In the present work, we  study the performance of IF where random unitary linear pre-processing is performed over
the spatial dimension only. 
Rather than aiming at guaranteeing successful transmission, we study the outage probability of the scheme.\footnote{This approach is similar to that taken in \cite{CoRa:Larsson2004} and \cite{TransmitDiversityOverQuasiStaticFadingChannelsUsingMultipleAntennasAndRandomSignalMapping:Yingxue2003} with respect to other transmission schemes.}
We focus attention to pre-processing matrices drawn from the isotropic (circular) unitary ensemble as this ensures that all channels having the same singular values, will have the same outage probability.

It is worth noting that the random pre-processing operation serves the purpose of quantifying the measure of ``bad" channels for IF receivers. We further note that the receiver considered is the standard one of \cite{IntegerForcing} but whereas
the results of \cite{IntegerForcing} deal with distributed transmit antennas (i.e., with no encoding across the transmit antennas), in the present paper joint (unitary) pre-processing
is assumed. While this may appear to preclude a distributed setting, some important statistical scenarios are in fact covered by the model. Specifically, in the case of a channel matrix whose entries are i.i.d. Gaussian random variables,
the random unitary transformation assumed in the analysis to follow is in reality
performed by nature, as discussed in Section~\ref{sec:lin-pre} below.


The outage probability in the considered setting thus corresponds to a scheme outage.\footnote{We use the term ``scheme outage" as opposed to ``channel outage." Specifically, in the considered setting, the channel is known to have sufficient mutual information to support the chosen target transmission rate.}
Namely, it is the probability that a random linear pre-processing matrix results in an effective channel for which the rate achievable with an IF receiver is smaller than the target rate. In order to provide universal performance guarantees, we study the \emph{worst-case} outage probability with respect to all possible singular value combinations corresponding to a given mutual information. Thus, the guaranteed performance does not depend on channel statistics.

We note that the performance of a
coding scheme over the compound channel is a strong measure of
its robustness. Clearly, performance guarantees for the compound channel immediately translate to guarantees for a statistical
channel model (as explained in the next section).
To the best of our knowledge, IF is the first practical scheme
for which (provable) universal bounds are known for the MIMO
channel.

We begin by empirically observing that space-only linear pre-processed IF (P-IF) has  greatly improved  performance, in terms of worst-case outage probability, compared to standard linear equalization.
We then derive an explicit bound on the performance of P-IF that depends only on the number of transmit antennas and the gap-to-capacity, where moderate
gaps suffice to guarantee a small outage probability.

As another example of an application of the results, we use the probabilistic method to obtain guarantees on the number of users that can be supported in closed-loop MIMO multicast (guaranteeing no outage occurs) as a function of the gap-to-capacity, when using linear pre-processed IF.

%

The paper is organized as follows. Section~\ref{sec:channel_model} defines the channel model of interest and formulates the problem described above. Section~\ref{sec:IFequalizationSec} provides  background on the integer-forcing
receiver as well as its use in conjunction with linear pre-processing.
Section~\ref{sec:BoundforNrxNtchannels} derives a universal upper bound for the outage probability of randomly linear pre-processed IF over the compound MIMO channel; tighter bounds for the specific case of two transmit antennas and a receiver employing a successive interference cancellation (SIC) variant of IF are also derived. Section~\ref{sec:BoundForClosedLoopMulticast} describes the application of the derived bounds to a close-loop MIMO multicast setting.

\section{Channel Model and Problem Formulation}
\label{sec:channel_model}
A point-to-point (complex) MIMO channel is considered where the transmitter is equipped with $\Nt$ antennas and the receiver is equipped with an arbitrary number ($N_r$) of antennas. Thus, a channel is described  by the relation
\begin{align}
\boldsymbol{y}_c=\svv{H}_c\boldsymbol{x}_c+\boldsymbol{z}_c,
\label{eq:channel_model}
\end{align}
where $\svv{x}_c\in \mathbb{C}^{\Nt}$ is the channel input vector, $\svv{y}_c \in \mathbb{C}^{N_r}$ is the channel output vector, $\svv{H}_c$ is an $N_r\times \Nt$ complex channel matrix, and
$\boldsymbol{z}_c$ is an additive noise vector of  i.i.d. unit variance circularly-symmetric complex Gaussian random variables.\footnote{We denote all complex variables with $c$ to distinguish them from their real-valued representation.}
The input vector $\boldsymbol{x}_c$ is subject to the power constraint\footnote{We denote by $[\cdot]^T$, transpose of a vector/matrix and by $[\cdot]^H$, the Hermitian transpose of a vector/matrix.}
\begin{eqnarray}
\mathbb{E}(\boldsymbol{x}_c^H\boldsymbol{x}_c)\leq \Nt \cdot \SNR.
\end{eqnarray}
We assume that the channel is fixed throughout the whole transmission of a codeword.

For a given input covariance matrix $\svv{Q}_c$, satisfying the power constraint $\Tr(\svv{Q}_c) \leq \Nt · \SNR$, the mutual information of the channel (\ref{eq:channel_model}) is maximized by a Gaussian input, and is given by
\begin{align}
C=\log\det \left( \svv{I}_{N_r\times N_r}+\svv{H}_c\svv{Q}_c\svv{H}_c^H \right). 
\label{eq:cl_capacity}
\end{align}

When it comes to designing transmission strategies, without loss of generality we may assume that $\svv{Q}=\svv{I}$ (isotropic transmission). Namely, we may ``absorb'' $\svv{Q}$ into the channel matrix by replacing $\svv{H}_c$ in~(\ref{eq:cl_capacity}) with $\bar{\svv{H}}_c=\svv{H}_c\svv{Q}^{1/2}$ (and with abuse of notation, we omit the bar). Similarly, we may set $\SNR=1$.
Hence, we may rewrite \eqref{eq:cl_capacity} as
\begin{align}
C&=\log \det \left(\svv{I}_{{N_r}\times{N_r}}+ \svv{H}_c\svv{H}_c^H\right) \nonumber \\
&=\log \det \left(\svv{I}_{N_t\times N_t}+ \svv{H}_c^H\svv{H}_c\right).
\end{align}
We define the set
\begin{align}
\mathbb{H}(C;N_t) = \left\{ \svv{H}_c : \log \det \left(\svv{I}_{N_t\times N_t} +  \svv{H}_c^H \svv{H}_c\right) = C \right\}
\label{eq:setOfChannels}
\end{align}
of all channel matrices $\svv{H}_c$ with $N_t$ transmit antennas and an \emph{arbitrary} number of receive antennas, having the same WI mutual information $C$.

The corresponding compound channel model is defined by (\ref{eq:channel_model}) with the channel matrix $\svv{H}_c$ arbitrarily
chosen from the set $\mathbb{H}(C;N_t)$. The matrix $\svv{H}_{c}$ that was chosen by nature is revealed to the receiver, but not to the
transmitter. Clearly, the capacity of this compound channel is $C$, and is achieved with an isotropic Gaussian input.

We note that the assumption that $N_r$ is arbitrary (i.e., universality) comes at a price. Specifically, restricting the number of receive antennas to a fixed number (more specifically to a value $N_r<N_t$) may be leveraged to obtain
improved performance and bounds since this amounts to  limiting the set over which we take the worst-case channel.
Nonetheless, as we will see, integer forcing behaves
well even in  the considered universal setting.


Employing the IF receiver allows approaching $C$ for ``most'' but not all matrices $\svv{H}_c\in\mathbb{H}(C;N_t)$. We quantify the  measure of the set of bad channel matrices by considering outage events, i.e., those events
where integer forcing fails even though the channel has sufficient mutual information.
More broadly, for a given coding scheme, denote the achievable rate for a given channel matrix $\svv{H}_c$ as $R_{\rm scheme}(\svv{H}_c)$.
Then, given a target rate $R<C$ and a channel $\svv{H}_c\in\mathbb{H}(C;N_t)$, the scheme is in outage when $R_{\rm scheme}(\svv{H}_c)<R$.
For the case of integer forcing, the explicit expression for $R_{\rm IF}(\svv{H}_c)$ is recalled in
Section~\ref{sec:IFequalizationSubSecNo}.

Since applying a linear pre-processing matrix $\svv{P}_c$ results in an effective channel $\svv{H}_c \cdot \svv{P}_c$, it follows that the achievable rate of a transmission scheme over this channel is $R_{\rm scheme}(\svv{H}_c\cdot\svv{P}_c)$.
When $\svv{P}_c$ is drawn at random, the latter rate is also random.
The worst-case (WC) scheme outage probability is defined in turn as
\begin{align}
     P^{\rm WC}_{\rm out,scheme}\left(C,R\right)= \sup_{\svv{H}_c\in\mathbb{H}(C;N_t)}\prob\left(R_{\rm scheme}(\svv{H}_c\cdot\svv{P}_c)<R\right),
     \label{eq:P_WC_OUT}
\end{align}
where the probability is over the ensemble of linear pre-processing matrices. The goal of this paper is to quantify the tradeoff between the transmission rate $R$ and the worst-case outage probability of integer forcing  $P^{\rm WC}_{\rm out,IF}\left(C,R\right)$.

\begin{remark}
\label{remark1}
Assume that $\svv{H}_c$ is modeled as having a probability distribution over the compound class $\mathbb{H}(C;N_t)$. In such a case, the outage probability is given by
\begin{align}
\mathbb{E}_{\svv{H}_c}\left[\prob\left(R_{\rm scheme}(\svv{H}_c\cdot\svv{P}_c)<R~|~\svv{H}_c\right)\right]
\label{eq:7}
\end{align}
where the probability is with respect to $\svv{P}_c$ (given $\svv{H}_c$) and the expectation is with respect to the distribution over $\mathbb{H}(C;N_t)$.

Note that in (\ref{eq:P_WC_OUT}), we take the supremum over the entire compound class rather than averaging over a given distribution. Since the average is always smaller than the supremum, it follows that (\ref{eq:P_WC_OUT})  universally upper bounds (\ref{eq:7}). That is, the bound holds for any distribution over $\mathbb{H}(C;N_t)$.

\end{remark}
\begin{remark}
Assume that $\svv{H}_c$ is modeled as having any probability distribution (not restricted to the compound class $\mathbb{H}(C;N_t)$). In such a case, the outage probability can be expressed as
\begin{align}
\mathbb{E}\biggl[\mathbb{E}_{\svv{H}_c}\left[\prob\left(R_{\rm scheme}(\svv{H}_c\cdot\svv{P}_c)<R~|~\svv{H}_c\right)~\Bigl|\Bigr.~C\right]\biggr]
\label{eq:8}
\end{align}
where again, the probability is with respect to $\svv{P}_c$, the inner expectation is with respect to the marginal distribution of $\svv{H}_c$ given the WI-MI $C$, while the outer one is with respect to $C$. Thus,
in order to bound the scheme outage probability, it suffices to know only the distribution of the WI-MI mutual information $C$. Namely, from (\ref{eq:P_WC_OUT}) we have that the outage probability will be no greater than
\begin{align}
\mathbb{E}\left[P^{\rm WC}_{\rm out,scheme}\left(C,R\right)\right]
\end{align}
where the expectation is over $C$.
\end{remark}

\section{Integer-forcing background}
\label{sec:IFequalizationSec}

\subsection{Single-User Integer-Forcing Equalization}
\label{sec:IFequalizationSubSecNo}

\begin{figure*}
\begin{center}
\begin{psfrags}
\psfrag{a1}[][][0.7]{$2\Nt$}
\psfrag{b1}[][][1]{$\boldsymbol{x}_1$}
\psfrag{b2}[][][1]{$\boldsymbol{x}_{2\Nt}$}
\psfrag{h1}[][][1]{$\svv{P}$}
\psfrag{c1}[][][1]{$\svv{H}$}
\psfrag{z1}[][][1]{$z_1$}
\psfrag{z2}[][][1]{$z_{2N_r}$}
\psfrag{y1}[][][1]{$y_1$}
\psfrag{y2}[][][1]{$y_{2N_r}$}
\psfrag{v1}[][][1]{$v_1$}
\psfrag{v2}[][][1]{$v_{2N_t}$}
\psfrag{e1}[][][1]{$\svv{B} $}
\psfrag{f1}[][][0.7]{$2\Nt$}
\psfrag{g1}[][][1]{$\svv{A}^{-1}$}
\includegraphics[width=2\columnwidth]{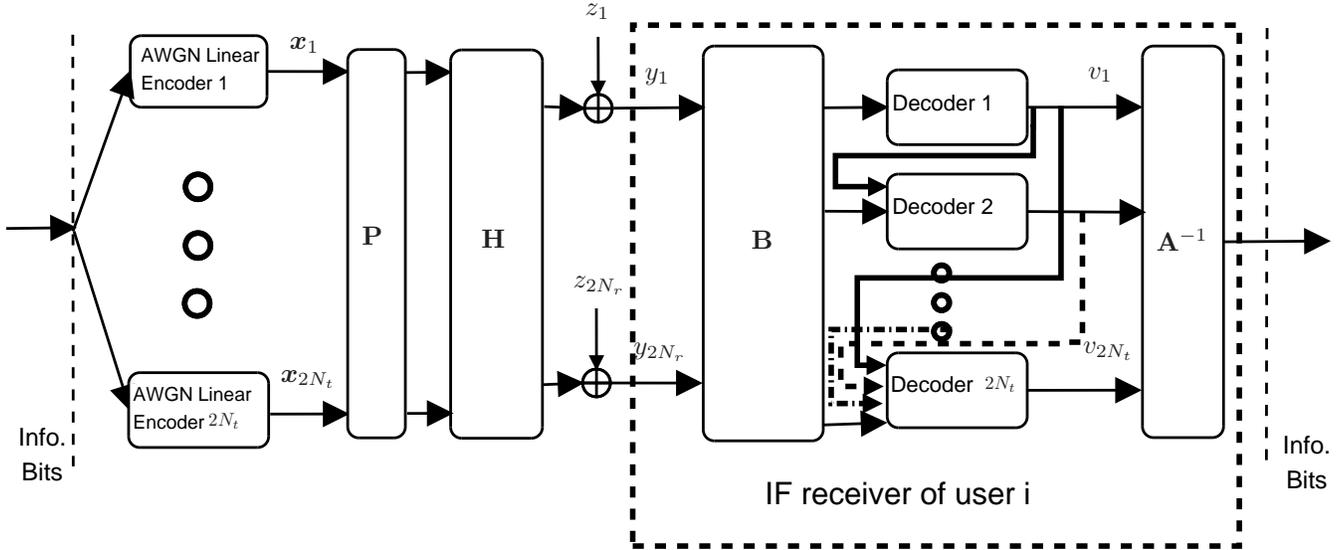}
\end{psfrags}
\end{center}
\caption{Linear pre-processed-IF / Linear pre-processed IF-SIC scheme. Feedback in the receiver is active only when SIC is used.}
\label{fig:integerForcing}
\end{figure*}

In \cite{IntegerForcing}, a receiver architecture scheme named ``integer forcing'' was proposed which we briefly recall. For our purposes, it will suffice to only state the achievable rates of IF and a high-level operational description of its elements. The reader is referred to  \cite{IntegerForcing} for the derivation, details and proofs, and further to  \cite{nazer2011compute,PracticalCodeDesign_OrdentlichErez:2011} and references therein for implementation considerations.

We follow the derivation of \cite{IntegerForcing} and describe integer forcing over the reals. Channel model~(\ref{eq:channel_model}) can be expressed via its real-valued representation as
\begin{align}
\underbrace{\begin{bmatrix}{\rm Re}({\bf y}_c) \\ {\rm Im}({\bf y}_c)\end{bmatrix}}_{{\bf y}}=\underbrace{\begin{bmatrix}{\rm Re}({\svv{H}_c}) && -{\rm Im}({\svv{H}_c}) \\
 {\rm Im}({\svv{H}_c}) && {\rm Re}({\svv{H}_c})\end{bmatrix}}_{\svv{H}}\underbrace{\begin{bmatrix}{\rm Re}({\bf x}_c) \\ {\rm Im}({\bf x}_c)\end{bmatrix}}_{{\bf x}}+\underbrace{\begin{bmatrix}{\rm Re}({\bf z}_c) \\ {\rm Im}({\bf z}_c)\end{bmatrix}}_{{\bf z}}.
\label{eq:realRepOfH}
\end{align}
This real-valued representation is used in the sequel to derive performance bounds for the complex channel $\svv{H}_c$. Note that the dimensions of $\svv{H}$ are $2 N_r\times2\Nt$.

It is assumed that information bits are fed into $2\Nt$ encoders, each of which uses the same \emph{linear} code that is designed for an AWGN channel.\footnote{The effect of the chosen code on the overall performance of IF is discussed in \cite{OrdentlichErez:IFUniversallyAchievesCapacityUpToGap:2013}.} The latter produces $2\Nt$ channel inputs (for example, ${x}_{m}$ for the $m$'th antenna).\footnote{For simplicity of notation the time index is suppressed as the block length plays
no role in our description. Of course, to approach capacity, one needs to use a long block.} At the receiver, a \emph{linear equalization} matrix $\svv{B}\in\mathbb{R}^{2\Nt\times 2N_r}$ is applied. It is easiest to understand IF  by first describing its zero-forcing variant. In this case,   $\svv{B} $ is designed so that the resulting equivalent channel $\svv{A}=\svv{B} \svv{H}$ is such that $\svv{A}\in\mathbb{Z}^{2\Nt\times2\Nt}$ is a full-rank integer matrix. In a practical implementation, it may be necessary for the matrix to be full-rank over a finite field $\mathbb{Z}_p$ (where $p$ is prime) over which the code is defined. Nonetheless, by taking $p$ large enough, it suffices for $\svv{A}$ to be invertible over the reals (see Lemma 2 in Appendix~A of \cite{ApproximateSumCapacity_OrdentlichErezNazer:2016}; see also \cite{PracticalCodeDesign_OrdentlichErez:2011}). This ensures that the output of the channel (without noise) after applying a modulo operation is a valid codeword.

Each of the equalized streams is next passed to a standard (up to the additional element of a modulo operation) AWGN decoder which tries to decode a linear combination of codewords, whose coefficients correspond
to a row of $\svv{A}$.
Finally, after the noise is removed, the original messages are recovered by applying the inverse of $\svv{A}$.
Thus, for IF equalization to be successful, decoding over all $2\Nt$ subchannels should be successful and the worst subchannel constitutes a bottleneck.
The operation of the receiver is depicted in Figure~\ref{fig:integerForcing} (where at this stage the linear pre-processing matrix can be considered as the identity matrix, i.e., $\svv{P}=\svv{I}$).

When using minimum mean square  error (MMSE) equalization, rather than zero-forcing, the linear equalizer takes the form
\begin{align}
\label{eq:Bint}
 \svv{B} =\svv{A}\svv{H}^T\left(\svv{I}+\svv{H}\svv{H}^T\right)^{-1},
\end{align}
and the input to the $m$'th decoder is
\begin{align}
 {\boldsymbol{y}}_{\rm eff,m}=\boldsymbol{v}_{m}+\boldsymbol{z}_{\rm eff,m}
\end{align}
where
\begin{align}
  \boldsymbol{z}_{\rm eff,m}=(\boldsymbol{b}_{m}^T\svv{H}-\boldsymbol{a}_{m}^T)\boldsymbol{x}+\boldsymbol{b}_{m}^T\boldsymbol{z}.
\end{align}
Here, ${\bf a}_{m}^T$ and ${\bf b}_{m}^T$ are the m'th row of $\svv{A}$ and $\svv{B}$ respectively. We can define the effective SNR at the $m$'th subchannel as
\begin{align}
 \SNR_{\rm eff}({\bf a}_{m})=\left(\boldsymbol{a}_{m}^T(\svv{I}+\svv{H}^T\svv{H})^{-1}\boldsymbol{a}_{m}\right)^{-1},
 \label{eq:SNReffM}
\end{align}
and the effective rate that can be achieved at the $m$'th subchannel as
\begin{align}
R_{\rm IF}(\svv{H};{\bf a}_{m})&=\frac{1}{2}\log\left(\SNR_{\rm eff}({\bf a}_{m})\right) \nonumber \\
&=-\frac{1}{2}\log\left(\boldsymbol{a}_{m}^T(\svv{I}+\svv{H}^T\svv{H})^{-1}\boldsymbol{a}_{m}\right).
\label{eq:Ram}
\end{align}
Note that the rate expression (\ref{eq:Ram}) is negative when $\SNR_{\rm eff}({\bf a}_{m}) < 1$. Hence, the achievable rate should be understood as the maximum between (\ref{eq:Ram}) and zero.

By Theorem~3 in~\cite{IntegerForcing}, transmission with IF equalization can achieve any rate satisfying $R<R_{\rm IF}(\svv{H})$ where

\begin{align}
 &R_{\rm IF}(\svv{H})
 =\max_{\substack{\svv{A}\in\mathbb{Z}^{2\Nt\times2\Nt}\\ \det{\svv{A}}\neq 0}}\min_{m=1,...,2\Nt}2\Nt \cdot R_{\rm IF}(\svv{H};{\bf a}_{m})  \nonumber \\
 &=\max_{\substack{\svv{A}\in\mathbb{Z}^{2\Nt\times2\Nt}\\ \det{\svv{A}}\neq 0}}\min_{m=1,...,2\Nt}2\Nt\frac{1}{2}\log\left(\SNR_{\rm eff}({\bf a}_{m})\right)\nonumber \\
  &=\max_{\substack{\svv{A}\in\mathbb{Z}^{2\Nt\times2\Nt}\\ \det{\svv{A}}\neq 0}}\min_{m=1,...,2\Nt}\Nt\log\left(\frac{1}{\boldsymbol{a}_{m}^T(\svv{I}+\svv{H}^T\svv{H})^{-1}\boldsymbol{a}_{m}}\right)\nonumber \\
 &=\Nt\log\left(\min_{\substack{\svv{A}\in\mathbb{Z}^{2\Nt\times2\Nt}\\ \det{\svv{A}}\neq 0}}\max_{m=1,...,2\Nt}\left(\boldsymbol{a}_{m}^T(\svv{I}+\svv{H}^T\svv{H})^{-1}\boldsymbol{a}_{m}\right)\right)^{-1} \nonumber \\
 &=-\Nt\log\left(\min_{\substack{\svv{A}\in\mathbb{Z}^{2\Nt\times2\Nt}\\ \det{\svv{A}}\neq 0}}\max_{m=1,...,2\Nt}\left(\boldsymbol{a}_{m}^T(\svv{I}+\svv{H}^T\svv{H})^{-1}\boldsymbol{a}_{m}\right)\right).
 \label{eq:IF_overTheReals}
\end{align}

The achievable rate of IF may also be described via the successive minima of a lattice associated with the channel matrix as we now recall. Any channel can be described via its SVD
\begin{align}
\svv{H}=\svv{U}\svv{\Sigma}\svv{V}^T.
\label{eq:svd}
\end{align}
Using (\ref{eq:svd}), the following decomposition is readily obtained
\begin{align}
(\svv{I}+\svv{H}^T\svv{H})^{-1}=\svv{V}\svv{D}^{-1}\svv{V}^T,
\label{eq:svd_I_HHt_ind}
\end{align}
where
$\svv{D}=\svv{I}+{\Sigma}^T{\Sigma}$.
It follows that \eqref{eq:Ram} may be rewritten as (Theorem 4 in \cite{IntegerForcing})
\begin{align}
R_{\rm IF}(\svv{H};{\bf a}_{m})&=-\frac{1}{2}\log\left(\|\svv{D}^{-1/2}\svv{V}^T{\bf a}_{m}\|^2\right) \nonumber \\
&\triangleq R_{m,\rm IF}(\svv{D},\svv{V}),
\label{eq:comprate_m}
\end{align}
where in the last equation the dependence on the choice of $\svv{A}$ is left implicit.
Let $\Lambda$ be the lattice spanned by $\svv{G}=\svv{D}^{-1/2}\svv{V}^T$and recall the definition of successive minima.
\begin{definition}
Let $\Lambda(\svv{G})$ be a lattice spanned by the full-rank matrix $\svv{G}\in\mathbb{R}^{K\times K}$. For $k = 1, ... , K$, we define the $k$'th successive minimum as
\begin{align}
\lambda_k(\svv{G})\triangleq\inf\left\{r:{\rm dim}\left({\rm span}\left(\Lambda(\svv{G})\cap\mathcal{B}_{K}(r)\right)\right)\geq k\right\}
\end{align}
where $\mathcal{B}_{K}(r)=\left\{{\bf x}\in\mathbb{R}^K:\|{\bf x}\|\leq r\right\}$ is the closed ball of radius $r$ around ${\bf 0}$. In words, the $k$-th successive minimum of a lattice is the minimal radius of a ball centered around ${\bf 0}$ that contains $k$ linearly independent lattice points.
\end{definition}
Thus, the maximal rate achievable with integer-forcing equalization (\ref{eq:IF_overTheReals}) may be
written as
\begin{align}
R_{\rm IF}(\svv{H}) &= R_{\rm IF}(\svv{D},\svv{V}) \nonumber  \\
&= -\Nt\log\left(\lambda_{2\Nt}^2(\Lambda)\right)  \nonumber \\
&=\Nt\log\left(\frac{1}{\lambda_{2\Nt}^2(\Lambda)}\right).
\label{eq:R_IF_SuccMin}
\end{align}

\subsection{Integer-Forcing Equalization With Successive Interference Cancellation}
\label{sec:IFequalizationSubSecWithSIC}

We also consider a version of IF equalization  incorporating successive interference cancellation. We will refer to it as IF-SIC.\footnote{We note that IF-SIC may in general allow using different rates per stream as stated in Theorem 5 in \cite{PracticalCodeDesign_OrdentlichErez:2011}. We  nevertheless assume throughout that
all streams are encoded via an identical linear code and hence have the same rate.}
We state only the achievable rates of IF-SIC and an operational description of its elements.
The reader is referred to \cite{OrdentlichErezNazer1:2013} for the derivation, details and proofs.

For a given choice of integer matrix $\svv{A}$, let $\svv{L}$ be defined by the following Cholesky decomposition
\begin{align}
    \label{eq:Kzz}
   \svv{A}\left(\svv{I}+\svv{H}^T\svv{H}\right)^{-1}\svv{A}^T&=\svv{A}\svv{V}\svv{D}^{-1}\svv{V}^T\svv{A}^T \nonumber\\
   &=\svv{L}\svv{L}^T.
\end{align}
Denoting by $\ell_{m,m}$ the diagonal entries of $\svv{L}$, IF-SIC can achieve (see \cite{OrdentlichErezNazer1:2013}) any rate satisfying $R<R_{\rm IF-SIC}(\svv{H})$ where
\begin{align}
 R_{\rm IF-SIC}(\svv{H})&=R_{\rm IF-SIC}(\svv{D},\svv{V}) \nonumber \\
 &=
 2\Nt \cdot \frac{1}{2}\max_{\svv{A}}\min_{m=1,...,2\Nt}\log\left(\frac{1}{\ell_{m,m}^2}\right),
 \label{eq:IF-SIC-rate}
\end{align}
and the maximization is over all full-rank $2\Nt\times 2\Nt$ integer  matrices.\footnote{We note
that since we choose to work with equal-rate streams, the constraints on
the achievable rate tuples of IF with SIC, as stated in
Theorem~2 of \cite{OrdentlichErezNazer1:2013}, play no role in the present work.}

We describe the operation of the IF-SIC receiver, adopting the nomenclature of \cite{OrdentlichErezNazer1:2013}.
We note that we describe the MMSE-GDFE version
of IF-SIC, as given in Appendix~A of \cite{OrdentlichErezNazer1:2013}, rather than its noise-prediction variant.
First, calculate:
\begin{enumerate}
\item The optimal integer matrix $\svv{A}$, i.e., the matrix  maximizing (\ref{eq:IF-SIC-rate}).
\item The covariance matrix (\ref{eq:Kzz}) of the effective noise (see (\ref{eq:SNReffM})) that arises when using the equalization matrix $\svv{B}$ as given in (\ref{eq:Bint}).
\item The optimal SIC matrix $\svv{S}$ as:
    \begin{eqnarray}
    \svv{S}={\rm diag}(\ell_{11},...,\ell_{MM})\cdot\svv{L}^{-1}.
    \end{eqnarray}
\item The optimal combined linear front-end processing matrix:
    \begin{eqnarray}
    \widetilde{\svv{B }}
    &=& \svv{S} \svv{B } \nonumber \\ &=&\svv{S}\svv{A}\svv{H}^T\left(\svv{I}+\svv{H}\svv{H}^T\right)^{-1}.
    \end{eqnarray}
\end{enumerate}
The operation of the receiver is depicted in Figure~\ref{fig:integerForcing} where the feedback depicted in the receiver is now active, and where now $\svv{B} $ is to be understood as $\widetilde{\svv{B}}$.
Note that this change of linear post-processing is essential to guarantee that the resulting noise variance is minimized.
The outputs of decoders $1,\ldots,m-1$ are multiplied by $S_{m,1},\ldots,S_{m,m-1}$, respectively, and are then subtracted from the input to decoder $m$, thereby performing SIC.
\subsection{Linear Pre-Processed Integer Forcing}
\label{sec:precIFequalizationWithSic}

\subsubsection{Motivating example: Performance comparison of linear MMSE and IF receivers}
As a motivating example, following Remark~\ref{remark1}, we compare the performance of linear MMSE and IF equalizers over
a specific ensemble of channels defined over $\mathbb{H}(C=8,N_t=2)$. Specifically, we consider a ``normalized" $2\times2$ Rayleigh fading ensemble, where the capacity is fixed to  $C=8$ bits. The ensemble is generated by drawing a $2 \times 2$ channel matrix with i.i.d. circularly symmetric complex Gaussian entries and then scaling the matrix (multiplying it by a value that we find by numerical search) such that the mutual information equals $8$ bits.\footnote{Note that in this ensemble, the probability of channels corresponding to $N_r \neq 2$ is zero.}
\begin{figure}
\begin{center}
\includegraphics[width=\columnwidth]{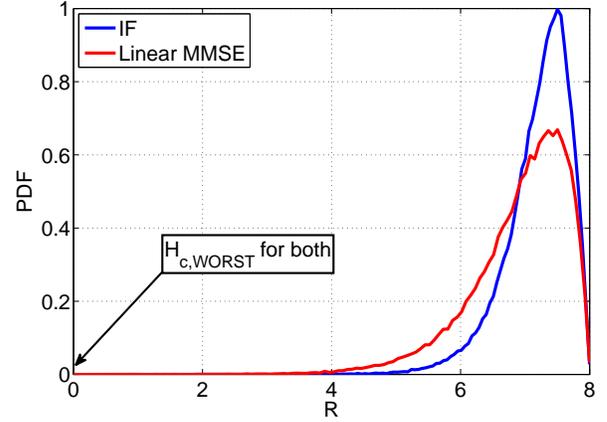}
\end{center}
\caption{Approximate probability density functions (based on Monte Carlo simulation)
of the  rates achievable with the linear MMSE receivers over a Rayleigh $2\times2$ MIMO channel normalized to $C=8$ bits.}
\label{fig:MotExample_0}
\end{figure}

Figure~\ref{fig:MotExample_0} depicts the probability density function of the rate achieved for this ensemble when using  linear MMSE and IF receivers. Since linear MMSE equalization is a special case of IF (setting $\svv{A}=\svv{I}$), as expected IF displays improved performance.

The real strength of IF lies however in the behavior of the ``tail".
For conventional linear equalizers, bad channels correspond to ill-conditioned matrices. An extreme case  is  the following channel  
\begin{align}
    \svv{H}_{\rm c,WORST}=\sqrt{2^8-1}\begin{bmatrix}1 & 0 \\ 0 & 0 \end{bmatrix}.
    \label{eq:bad_channel}
\end{align}
In this case, the data stream sent from the second antenna is completely lost when transmitted over the channel. Clearly, in this example, no receiver (including maximum likelihood) will be able to recover the lost data stream and thus the achievable rate of both linear and IF equalization is also zero.

Consider now the channel $\svv{H}_{\rm c,WORST}\cdot\svv{P}_c$ where $\svv{P}_c$ is a unitary matrix. As the singular values
remain unchanged, it is clear that the channel remains ill-conditioned and hence a linear receiver (not allowing for a modulo operation) will still
suffer from poor performance. On the other hand, the IF receiver performs well even over ill-conditioned MIMO channels, and in
fact, the performance of the IF receiver for the channel \eqref{eq:bad_channel} is good for ``most" pre-processing matrices as illustrated next.

\subsubsection{Linear pre-processing ensemble and resulting performance}
\label{sec:lin-pre}
The transmission scheme we analyze consists of applying a unitary  pre-processing matrix at the transmitter and IF equalization (either with or without SIC) at the receiver, as depicted in Figure~\ref{fig:integerForcing}. Applying
linear pre-processing  may be viewed as generating a ``virtual'' channel $\widetilde{\svv{H}}_c=\svv{H}_c\svv{P}_c$ over which transmission takes place. We restrict ourselves to unitary linear pre-processing matrices in order to keep the transmission power unchanged.

Throughout this paper, we assume that the linear pre-processing matrix $\svv{P}_c$ is drawn from what is referred to as the ``circular unitary ensemble''
(CUE). The ensemble is defined by the unique distribution on unitary matrices that is invariant under left and right unitary transformations (Theorem 8.3 in \cite{metha1967random}). In other words, the ensemble amounts to inducing the Haar measure on the unitary group of degree $N_t$.\footnote{An explanation on how
to generate matrices belonging to the CUE can be found in, e.g.,  \cite{mezzadri2006generate:2006}.}

\begin{remark}
While in general the  transformation ${\svv{P}}_c$ implies joint processing at the encoders, we note that in some natural statistical scenarios, including that of  an i.i.d. Rayleigh fading environment, the random transformation is actually performed by nature.\footnote{This follows since the left and right singular vector matrices of the an i.i.d. Gaussian matrix $\svv{H}_c$ are equal to the eigenvector matrices of the  Wishart ensembles $\svv{H}_c\svv{H}_c^{H}$ and $\svv{H}^H_c\svv{H}_c$, respectively. The latter are known to be CUE (Haar) distributed. See, e.g., Chapter~4.6 in \cite{edelman2005random}.} In such settings, our analysis holds even when the transmitters are distributed as in a multiple-access scenario.
\end{remark}



Figure~\ref{fig:MotExample} compares the achievable rates of the linear MMSE and IF receivers over the singular channel \eqref{eq:bad_channel}, when applying random CUE pre-processing. As can be seen, the
achievable rate of IF is high
for most pre-processing matrices, achieving a large fraction of $C$
with high probability.




\begin{figure}[h]
\begin{center}
\includegraphics[width=\columnwidth]{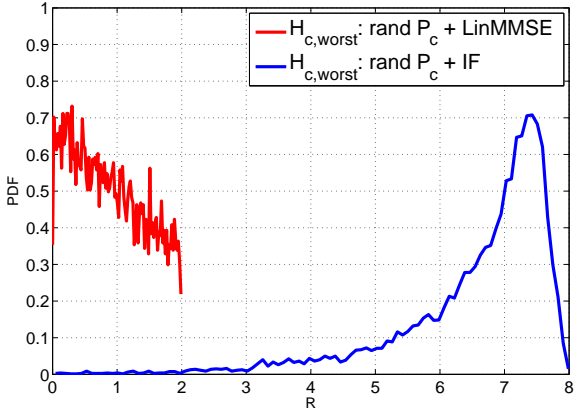}
\end{center}
\caption{Approximate probability density functions (based on Monte Carlo simulation) of the rates achievable with the linear MMSE and IF receivers over the channel \eqref{eq:bad_channel}, when applying a random linear pre-processing matrix drawn from the CUE.}
\label{fig:MotExample}
\end{figure}

\subsubsection{Properties of CUE pre-processing}
The SVD of the effective channel resulting from pre-processing is given by
\begin{align}
\svv{H}_c{\svv{P}}_c=\svv{U}_c\svv{\Sigma}_c{\svv{V}}_c^H{\svv{P}}_c.
\end{align}
Since ${\svv{V}}_c^H{\svv{P}}_c$ is equal in distribution to ${\svv{P}}_c$, for the sake of computing outage probabilities, we may simply assume that ${\svv{V}}_c^H$ (and also ${\svv{V}}_c$) is drawn from the CUE.



We note that the eigenvalue decomposition of the equivalent real channel can be written as
\begin{align}
(\svv{I}+\svv{H}^T\svv{H})^{-1}=\svv{V}\svv{D}^{-1}\svv{V}^T,
\end{align}
where
\begin{align}
{\svv{V}}=\begin{bmatrix}{\rm Re}({\svv{V}_c}) && -{\rm Im}({\svv{V}_c}) \\ {\rm Im}({\svv{V}_c}) && {\rm Re}({\svv{V}_c})\end{bmatrix}.
\end{align}
and
\begin{align}
\svv{D}=\begin{bmatrix} \svv{D}_c && \svv{0} \\ \svv{0} && \svv{D}_c \end{bmatrix}.
\label{eq:Dreal}
\end{align}
Further, the rates of IF, with or without SIC, for such a channel come in pairs.

Denoting the gap-to-capacity by $\Delta C$, we may therefore rewrite the worst-case IF outage probability as defined in (\ref{eq:P_WC_OUT}) as
\begin{align}
P^{\rm WC}_{\rm out,IF}\left(C,C-\Delta C\right)=\sup_{\svv{D}\in\mathbb{D}(C;2N_t)}\prob\left(R_{\rm IF}(\svv{D},\svv{V})<C-\Delta C\right)
\label{eq:rewirte6withD}
\end{align}
where we define $\mathbb{D}(C;2N_t)$ as the set of all $2N_t \times 2N_t$
diagonal matrices $\svv{D}$, with diagonal elements appearing in pairs, such that $\det\left(\svv{D}\right)=2^C$.

Another property we use in the sequel is the following. Denote by $d_{c,i}$ the diagonal entries of $\svv{D}_c$. Then
\begin{align}
2^C&=2^{\log\det\left(\svv{I}_{\Nt \times \Nt}+\svv{H}_c^H\svv{H}_c\right)} \nonumber\\
&=\det\left(\svv{V}_c\svv{D}_c\svv{V}_c^H\right) \nonumber \\
&=\prod_{i=1}^{\Nt}d_{c,i}.
\end{align}
Denoting by $d_i$ the diagonal entries of $\svv{D}$, we similarly have
\begin{align}
2^C&=2^{\frac{1}{2}\log\det\left(\svv{I}_{\Nt \times \Nt}+\svv{H}^T\svv{H}\right)} \nonumber \\
&=\sqrt{\det\left(\svv{V}\cdot\svv{D}\cdot\svv{V}^T\right)} \nonumber \\
&=\prod_{i=1}^{2\Nt}\sqrt{d_{i}}.
\end{align}
From (\ref{eq:Dreal}) we observe that since the singular values of the real channel come in pairs, we have
\begin{align}
\prod_{i=1}^{2\Nt}\sqrt{d_{i}}=\prod_{i=1}^{\Nt}d_{c,i}=2^C.
\end{align}
We denote $d_{\min}=\displaystyle\min_{i}d_i$ and  $d_{\max}=\displaystyle\max_{i}d_i$.

The following lemma will prove useful in characterizing the performance of CUE pre-processed IF. It relates the outage probability of CUE pre-processed IF to that arising when the pre-processing is performed using the circular real
ensemble (CRE).\footnote{The CRE is defined analogously to the CUE for the case of real orthonormal
matrices. That is, the ensemble is defined by the unique distribution on orthonormal matrices that is invariant under left and right orthonormal transformations.   
}
\begin{lemma}
Let $\svv{O}$ be a real $2\Nt \times 2\Nt$ matrix drawn from the CRE. Further, let $\mathbf{a}$ be a $2\Nt \times 1$ vector of integers. When
applying a random complex linear pre-processing matrix $\svv{V}_c$ that is drawn from the CUE (inducing a real-valued orthogonal pre-processing matrix $\svv{V}$), we have that $\left\|\svv{D}^{1/2}\svv{V}{\bf a}\right\|$ and $\left\|\svv{D}^{1/2}\svv{O}{\bf a}\right\|$
are equal in distribution.
\label{lem:lem1}
\end{lemma}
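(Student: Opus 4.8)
The plan is to reduce the claim to a single distributional identity for the random vector obtained by applying the pre-processing matrix to the fixed integer vector $\mathbf{a}$. Observe that both quantities in the statement are the image of a random vector under the \emph{same} deterministic map: writing $g(\mathbf{w})\triangleq\|\svv{D}^{1/2}\mathbf{w}\|$, we have $\|\svv{D}^{1/2}\svv{V}\mathbf{a}\|=g(\svv{V}\mathbf{a})$ and $\|\svv{D}^{1/2}\svv{O}\mathbf{a}\|=g(\svv{O}\mathbf{a})$. Since a deterministic function of two equal-in-distribution random objects is again equal in distribution, it suffices to prove that the $\mathbb{R}^{2\Nt}$-valued vectors $\svv{V}\mathbf{a}$ and $\svv{O}\mathbf{a}$ have the same law. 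I note that this reduction uses nothing about the pairing structure of $\svv{D}$, which is why the argument will go through for an arbitrary $\svv{D}$.

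First I would treat $\svv{O}\mathbf{a}$. Because $\svv{O}$ is Haar-distributed on the real orthogonal group $O(2\Nt)$ (the CRE) and $\mathbf{a}$ is fixed, the law of $\svv{O}\mathbf{a}$ is invariant under every orthogonal transformation and is supported on the sphere of radius $\|\mathbf{a}\|$; by uniqueness of the rotation-invariant probability measure on a sphere, $\svv{O}\mathbf{a}$ is uniform on $\{\mathbf{w}\in\mathbb{R}^{2\Nt}:\|\mathbf{w}\|=\|\mathbf{a}\|\}$.

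The core step is to show that $\svv{V}\mathbf{a}$ obeys this very same law. Writing $\mathbf{a}=[\mathbf{a}_1^T\ \mathbf{a}_2^T]^T$ and letting $\mathbf{a}_c$ be the complex vector with real part $\mathbf{a}_1$ and imaginary part $\mathbf{a}_2$, the real-representation identity that defines $\svv{V}$ gives that $\svv{V}\mathbf{a}$ is exactly the real representation of $\svv{V}_c\mathbf{a}_c$, where $\|\mathbf{a}_c\|=\|\mathbf{a}\|$. Since $\svv{V}_c$ is Haar on $U(\Nt)$ and $\mathbf{a}_c$ is fixed, $\svv{V}_c\mathbf{a}_c$ is uniform on the complex sphere of radius $\|\mathbf{a}\|$. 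The crux is then the identification of measures: the uniform (i.e.\ $U(\Nt)$-invariant) probability measure on the complex sphere in $\mathbb{C}^{\Nt}$ coincides, in real coordinates, with the uniform ($O(2\Nt)$-invariant) probability measure on the sphere $S^{2\Nt-1}$ of the same radius. I expect this identification to be the main obstacle, since $U(\Nt)$-invariance is strictly weaker than $O(2\Nt)$-invariance and does not by itself single out the surface measure. The cleanest route is via Gaussians: a standard complex Gaussian vector in $\mathbb{C}^{\Nt}$ is, in real coordinates, a rotationally invariant real Gaussian vector in $\mathbb{R}^{2\Nt}$, and normalizing it yields on one hand the complex-uniform law and on the other the real-uniform law, so the two measures agree. (Equivalently, one may invoke transitivity of the $U(\Nt)$-action on $S^{2\Nt-1}$ together with uniqueness of the invariant probability measure under a transitive compact group action, using that the real-uniform measure is $O(2\Nt)$-invariant hence \emph{a fortiori} $U(\Nt)$-invariant.)

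Combining the two steps, $\svv{V}\mathbf{a}$ and $\svv{O}\mathbf{a}$ are both uniform on the radius-$\|\mathbf{a}\|$ sphere in $\mathbb{R}^{2\Nt}$, hence equal in distribution; applying the common map $g$ then completes the proof. The only degenerate case is $\mathbf{a}=\mathbf{0}$, for which both sides vanish identically, whereas the nonzero integer vectors arising as rows of a full-rank $\svv{A}$ give a sphere of positive radius and the argument applies verbatim.
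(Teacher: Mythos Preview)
Your proof is correct and follows essentially the same route as the paper's. Both arguments pass to the complex representation $\svv{V}_c\mathbf{a}_c$, identify its law as uniform on the complex sphere of radius $\|\mathbf{a}\|$, and then use the Gaussian normalization construction to conclude that in real coordinates this coincides with the uniform law on $S^{2\Nt-1}$, which is exactly the law of $\svv{O}\mathbf{a}$. The only cosmetic difference is that the paper first reduces $\mathbf{a}_c$ to $\|\mathbf{a}_c\|\,[1\ 0\ \cdots\ 0]^T$ and works with the first column $\mathbf{v}_{c,1}$ of $\svv{V}_c$, whereas you argue directly that $\svv{V}\mathbf{a}\overset{d}{=}\svv{O}\mathbf{a}$ and then push both through the common map $g(\cdot)=\|\svv{D}^{1/2}\cdot\|$; the substance is the same.
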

\begin{proof}
See Appendix~\ref{sec:proofOfLemma1}.
\end{proof}

%

%
\section{Bounds on the outage probability of CUE pre-processed Integer-Forcing}
\label{sec:BoundforNrxNtchannels}
\subsection{Derivation of Upper Bounds}
Define the dual lattice $\Lambda^*$ which is spanned by the matrix
\begin{align}
(\svv{G}^T)^{-1}=\svv{D}^{1/2}\svv{V}^T.
\label{eq:dualLattice}
\end{align}
Recall that the rate of IF is given by (\ref{eq:R_IF_SuccMin}). Now, the successive minima of $\Lambda$ and $\Lambda^*$ are related by (Theorem 2.4 in\cite{KorkinZolotarev:Lagarias1990})
\begin{align}
\lambda_1(\Lambda^*)^2\lambda_{2\Nt}(\Lambda)^2\leq\frac{2\Nt+3}{4}{\bar{\gamma}_{2\Nt}}^2,
\label{eq:Lagarias}
\end{align}
where $\bar{\gamma}_{2\Nt}$ is a ``monotonized" Hermite's constant as defined next.\footnote{In \cite{Banaszczyk}, Theorem 2.1, another bound for the relation between the successive minima of $\Lambda$ and $\Lambda^*$ is given. This bound is tighter for very large dimensions (it increases with $n^2$, whereas (\ref{eq:Lagarias}) increases with $n^3$). However, (\ref{eq:Lagarias}) has better constants and the cross between these expressions occurs only at $n=254$.} 
Hermite's constant is known only for dimensions $1-8$ and $24$. Since it has been never proved that $\gamma_{2\Nt}$ is monotonically increasing, we define
\begin{align}
\bar{\gamma}_{2\Nt}=\max\left\{\gamma_i:1\leq i \leq 2\Nt\right\}.
\end{align}
The tightest known upper bound for Hermite's constant, as derived in \cite{minimumValueOfQuadraticForms:Blichfeldt1929}, is
 \begin{align}
\gamma_{2\Nt}\leq\left(\frac{2}{\pi}\right)\Gamma\left(2+\Nt\right)^{1/\Nt}.
\label{eq:blich}
\end{align}
Since this is an increasing function of $N_t$, it follows that $\bar{\gamma}_{2\Nt}$ is smaller than the r.h.s. of (\ref{eq:blich}).\footnote{In the sequel we use the known values of Hermite's constant when possible, i.e. for $\Nt=2,3,4$. For other dimensions, we use this bound.}
Combining the latter with the exact values of
the Hermite constant for dimensions for which
it is known, we may lower bound the achievable rates of IF via the dual lattice as follows
\begin{align}
R_{\rm IF}(\svv{D},\svv{V})\geq\Nt\log\left(\frac{\lambda_{1}^2(\Lambda^*)}{\alNt}\right),
\end{align}
where
\begin{align}
\alNt=\begin{cases}
\frac{2\Nt+3}{4}\gamma_{2\Nt}^2,& \Nt=2,3,4,12 \\
\frac{2\Nt+3}{4}\left(\frac{2}{\pi}\Gamma\left(2+\Nt\right)^{1/\Nt}\right)^2,& {\rm otherwise}
\end{cases}.
\end{align}
Hence,
\begin{align}
&\prob\left(R_{\rm IF}(\svv{D},\svv{V}\right)<C-\Delta C) \nonumber \\
&\leq \prob\left(\Nt\log\left(\frac{\lambda_{1}^2(\Lambda^*)}{\alNt}\right)<C-\Delta C\right)  \nonumber \\
&=\prob\left(\lambda_{1}^2(\Lambda^*)<2^{\frac{C-\Delta C}{\Nt}}\alNt\right).
\label{eq:BallEq3}
\end{align}

The next lemma provides an upper bound on the outage probability as a function of the gap-to-capacity
$\Delta C$, the capacity $C$, and $d_{\min}$ (as well as the number of transmit antennas). Denote
\begin{align}
    \mathbb{A}(\genGam,d;2N_t) \triangleq \left\{{\bf a} \in \mathbb{Z}^{2N_t} :0<\|{\bf a}\|<\sqrt{\frac{\genGam}{d}}\right\}.
    \label{eq:A_beta_d}
\end{align}
\begin{lemma}
\label{lem:lem2}
For any complex Gaussian MIMO channel with $N_t$ transmit antennas and  with white-input mutual information $C$, i.e., $\svv{D}\in\mathbb{D}(C;2N_t)$, and for $\svv{V}_c$ drawn from the CUE (inducing a real-valued orthogonal pre-processing matrix $\svv{V}$), the outage probability of integer forcing is upper bounded by
\begin{align}
&\prob\left(R_{\rm IF}(\svv{D},\svv{V}\right)<C-\Delta C)  \nonumber \\
&\leq \sum_{{\bf a}\in\mathbb{A}(\genGam,d_{\min};2N_t)}\frac{2\Nt\left(2^{\frac{C-\Delta C}{\Nt}}\alNt\right)^{\Nt-1/2}}{\|{\bf a}\|^{2\Nt-1}2^C\frac{2}{\sqrt{d_{\min}}}},
\label{eq:47}
\end{align}
where
\begin{align}
\genGam=2^{\frac{C-\Delta C}{\Nt}}\alNt.
\label{eq:beta}
\end{align}
\label{eq:lemma2}
\end{lemma}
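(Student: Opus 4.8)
The plan is to control the outage probability by a union bound over the dual lattice and then estimate each term as a simple geometric probability on the sphere.

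\textbf{Reduction and union bound.} My starting point is the bound (\ref{eq:BallEq3}), which reduces the task to estimating $\prob\!\left(\lambda_{1}^{2}(\Lambda^{*})<\genGam\right)$ with $\genGam$ as in (\ref{eq:beta}). Since $\Lambda^{*}$ is spanned by $\svv{D}^{1/2}\svv{V}^{T}$ (see (\ref{eq:dualLattice})), the event $\{\lambda_{1}^{2}(\Lambda^{*})<\genGam\}$ is exactly the event that some nonzero integer vector produces a short dual vector, i.e.\ $\|\svv{D}^{1/2}\svv{V}^{T}{\bf a}\|^{2}<\genGam$ for some ${\bf a}\in\mathbb{Z}^{2\Nt}\setminus\{0\}$. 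I would apply the union bound over such ${\bf a}$. Because $\svv{V}$ is orthogonal and every diagonal entry of $\svv{D}$ is at least $d_{\min}$, we have $\|\svv{D}^{1/2}\svv{V}^{T}{\bf a}\|^{2}\ge d_{\min}\|{\bf a}\|^{2}$, so a term can be nonzero only when $\|{\bf a}\|<\sqrt{\genGam/d_{\min}}$. This restricts the union to precisely the set $\mathbb{A}(\genGam,d_{\min};2\Nt)$ of (\ref{eq:A_beta_d}).

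\textbf{Each term as a spherical fraction.} For fixed ${\bf a}$ I would invoke Lemma~\ref{lem:lem1} — together with the fact that $\svv{V}^{T}\stackrel{d}{=}\svv{V}$, since the CRE (Haar) measure is invariant under transposition/inversion — to replace $\|\svv{D}^{1/2}\svv{V}^{T}{\bf a}\|$ by $\|\svv{D}^{1/2}\svv{O}{\bf a}\|$ with $\svv{O}$ drawn from the CRE. Then $\svv{O}{\bf a}$ is uniform on the sphere of radius $\|{\bf a}\|$, so with $\mathbf{u}=\svv{O}{\bf a}/\|{\bf a}\|$ uniform on the unit sphere the term becomes $\prob(\mathbf{u}^{T}\svv{D}\mathbf{u}<t)$, where $t=\genGam/\|{\bf a}\|^{2}$. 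This is the fraction of the unit sphere lying inside the ellipsoid $\{\mathbf{x}:\mathbf{x}^{T}\svv{D}\mathbf{x}<t\}$; note that ${\bf a}\in\mathbb{A}$ is exactly the condition $t>d_{\min}$, which is what makes this fraction positive.

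\textbf{The crux: bounding the spherical fraction.} Here I would exploit the eigenvalue pairing of the real representation (\ref{eq:Dreal}): $d_{\min}$ occurs at least twice, say in coordinates $j,j'$. The key geometric fact is that for $\mathbf{u}$ uniform on $S^{2\Nt-1}$, the marginal law of the remaining $2\Nt-2$ coordinates is \emph{uniform} on the unit ball $B^{2\Nt-2}$ (projecting out exactly two coordinates gives density $\propto(1-\|\mathbf v\|^{2})^{0}$). Using $u_{j}^{2}+u_{j'}^{2}=1-\sum_{i\neq j,j'}u_{i}^{2}$ and $d_{j}=d_{j'}=d_{\min}$, the constraint $\mathbf{u}^{T}\svv{D}\mathbf{u}<t$ reads $\sum_{i\neq j,j'}(d_{i}-d_{\min})u_{i}^{2}<t-d_{\min}$. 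Since on the ball $\sum_{i\neq j,j'}d_{i}v_{i}^{2}=\sum_{i\neq j,j'}(d_{i}-d_{\min})v_{i}^{2}+d_{\min}\|\mathbf v\|^{2}<t$, this region is contained in the ellipsoid $\{\sum_{i\neq j,j'}d_{i}v_{i}^{2}\le t\}$, whose volume ratio to $B^{2\Nt-2}$ is $t^{\Nt-1}/\prod_{i\neq j,j'}\sqrt{d_{i}}$. As $\prod_{i\neq j,j'}\sqrt{d_{i}}=2^{C}/d_{\min}$, I obtain the clean estimate $\prob(\mathbf{u}^{T}\svv{D}\mathbf{u}<t)\le (d_{\min}/2^{C})\,t^{\Nt-1}$. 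Substituting $t=\genGam/\|{\bf a}\|^{2}$ and using $\|{\bf a}\|<\sqrt{\genGam/d_{\min}}\le \Nt\sqrt{\genGam/d_{\min}}$ on $\mathbb{A}$, each such term is at most the corresponding summand in (\ref{eq:47}); summing over $\mathbb{A}(\genGam,d_{\min};2\Nt)$ then yields the lemma.

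\textbf{Main obstacle.} The whole difficulty sits in the previous paragraph: I must get a per-term bound depending only on $d_{\min}$ and $2^{C}=\prod_{i}\sqrt{d_{i}}$, not on the individual singular values, because the later worst-case supremum over $\mathbb{D}(C;2\Nt)$ would otherwise be intractable — a naive ellipsoid-volume estimate produces $\prod_{i}\sqrt{d_{i}-d_{\min}}$, which blows up when eigenvalues cluster near $d_{\min}$. The device that eliminates this dependence is the exact uniform-on-the-ball marginal obtained by projecting out exactly two coordinates, which the pairing in (\ref{eq:Dreal}) supplies for free; recasting the ball-fraction in the form $\genGam^{\Nt-1/2}\|{\bf a}\|^{-(2\Nt-1)}$ displayed in (\ref{eq:47}) and verifying the restriction of the union to $\mathbb{A}$ are then routine.
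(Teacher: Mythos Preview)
Your proof is correct and takes a genuinely different route from the paper. Both proofs share the reduction via (\ref{eq:BallEq3}) to bounding $\prob(\lambda_1^2(\Lambda^*)<\genGam)$, the union bound restricted to $\mathbb{A}(\genGam,d_{\min};2\Nt)$, and the use of Lemma~\ref{lem:lem1} to replace $\svv{V}$ by a CRE matrix. The divergence is in the per-term estimate. The paper interprets each term as a ratio of ellipsoid surface areas: it upper bounds the ``cap'' by the full surface area of the sphere of radius $\sqrt{\genGam}$ and lower bounds the ellipsoid surface via Carlson's inequality $L(x_1,\ldots,x_{2\Nt})>{\rm Vol}(\mathcal{B}_{2\Nt}(1))\prod_i x_i\sum_i x_i^{-1}$, keeping only the two $1/\sqrt{d_{\min}}$ terms in the sum. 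You instead rewrite the term as the spherical measure $\prob(\mathbf{u}^T\svv{D}\mathbf{u}<t)$ and exploit the eigenvalue pairing in (\ref{eq:Dreal}) directly: projecting out the two $d_{\min}$-coordinates gives exactly the uniform law on $B^{2\Nt-2}$, after which a clean ellipsoid-volume bound in dimension $2\Nt-2$ yields $(d_{\min}/2^C)t^{\Nt-1}$. Your argument is more elementary --- it avoids the surface-area inequality entirely --- and in fact gives the sharper per-term estimate $\frac{d_{\min}\genGam^{\Nt-1}}{2^C\|\mathbf{a}\|^{2\Nt-2}}$, which you then weaken (via $\|\mathbf{a}\|<\sqrt{\genGam/d_{\min}}$) to the form displayed in (\ref{eq:47}). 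The paper's approach, on the other hand, does not rely on the multiplicity of $d_{\min}$ and would go through verbatim for a real-valued channel where no pairing is available.
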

\begin{remark}
The summation in (\ref{eq:47}) may be greater than 1 for certain values of $C$ and $\Delta C$. Obviously, one may  take the minimum between this lemma and $1$ when bounding the outage probability.
\end{remark}
\begin{proof}
For a given $\genGam>0$, let us upper bound the probability
$\prob\left(\lambda_1^2(\Lambda^*)<{\genGam}\right)$
or equivalently
$\prob\left(\lambda_1(\Lambda^*)<\sqrt{\genGam}
\right)$.
Noting that the event $\left\{\lambda_1(\Lambda^*)<\sqrt{\genGam}\right\}$ is equivalent to the event
\begin{align}
\bigcup_{\mathbf{a}\in{\mathbb{Z}^{2\Nt}\setminus\{{\bf 0}\}}}\left\{||\svv{D}^{1/2}\svv{V}\mathbf{a}||<\sqrt{\genGam}\right\}
\label{eq:unionOfAllIntVectors}
\end{align}
and applying the union bound gives
\begin{align}
\prob\left(\lambda_1(\Lambda^*) < \sqrt{\genGam}\right)& \leq \sum_{{\bf a}\in\mathbb{Z}^{2\Nt} \setminus\{{\bf 0}\} }\prob\left(\|\svv{D}^{1/2}\svv{V}{\bf a}\|<\sqrt{\genGam}\right)
\nonumber
\\
&=\sum_{{\bf a}\in\mathbb{A}(\genGam,d_{\min};2N_t)}\prob\left(\|\svv{D}^{1/2}\svv{V}^T{\bf a}\|<\sqrt{\genGam}\right),
\label{eq:outageProbReal}
\end{align}
where the equality in (\ref{eq:outageProbReal}) follows since whenever \mbox{$\aNorm\cdot\sqrt{d_{\min}}\geq\sqrt{\genGam}$}, we have that $\prob\left(\|\svv{D}^{1/2}\svv{V}{\bf a}\|<\sqrt{\genGam}\right)=0$.

Let $\mathcal{S}$ denote the unit sphere of dimension $2N_t$, i.e.,
\begin{align}
\mathcal{S}=\left\{(x_1,x_2,\cdots,x_{2N_t}):x_1^2+x_2^2+\cdots+x_{2N_t}^2=1\right\}.
\end{align}
By Lemma~\ref{lem:lem1}
\begin{align}
\prob\left(\|\svv{D}^{1/2}\svv{V}{\bf a}\|<\sqrt{\genGam}\right)=\prob\left(\|\svv{D}^{1/2}\svv{O}\mathbf{a}\|<\sqrt{\genGam}\right).
\end{align}
Let ${\bf o}_{\|{\bf a}\|}\sim{\rm Unif}(\mathcal{S\cdot\|\mathbf{a}\|})$, and note that $\svv{O}\mathbf{a}$ is equal in distribution to ${\bf o}_{\|{\bf a}\|}$. It follows that
\begin{align}
\prob\left(\|\svv{D}^{1/2}\svv{V}{\bf a}\|<\sqrt{\genGam}\right)=\prob\left(\|\svv{D}^{1/2}{\bf o}_{\|{\bf a}\|}\|<\sqrt{\genGam}\right).
\label{eq:outProbOverTheReals}
\end{align}


Now the probability appearing on the r.h.s. of (\ref{eq:outProbOverTheReals}) has a simple geometric interpretation. 
Define an ellipsoid with axes $x_i=\sqrt{d_i}\cdot\|{\bf a}\|$ and denote its surface area by $L(x_1,x_2,...,x_{2\Nt})$. Then, the r.h.s. of (\ref{eq:outProbOverTheReals}) is the ratio of the part of the  surface area of an ellipsoid that lies inside a sphere of radius $\sqrt{\genGam}$ (denoted by ${\rm CAP_{\rm ell}}(x_1,x_2,...,x_{2\Nt})$) and the total surface area of the ellipsoid. This is illustrated in Figure~\ref{fig:fig1} for the case of two real dimensions.
We may rewrite (\ref{eq:outProbOverTheReals}) as
\begin{align}
\prob\left(\|\svv{D}^{1/2}{\bf o}_{\|{\bf a}\|}\|<\sqrt{\genGam}\right)&=\frac{|\svv{D}^{1/2}\mathcal{S}\cdot\|{\bf a}\|\cap\sqrt{\genGam}\mathcal{S}|}{\left|\svv{D}^{1/2}\|{\bf a}\|\mathcal{S}\right|}
\nonumber \\
&= \frac{{\rm CAP_{\rm ell}}(x_1,x_2,...,x_{2\Nt})}{L(x_1,x_2,...,x_{2\Nt})}.
\label{eq:ellipsCirecEq}
\end{align}

Neither the numerator nor the denominator of (\ref{eq:ellipsCirecEq}) has a closed-form expression. In order to upper bound
this ratio, we upper bound the numerator  $\rm CAP_{\rm ell}(x_1,x_2,...,x_{2\Nt})$ and lower bound the denominator (the surface area of the ellipsoid).
Using inequality (4.3) in \cite{carlson1966some} (see also inequality (57) and historical account in \cite{tee2004surface}), we have
\begin{align}
L(x_1,...,x_{2\Nt})&>{\rm Vol}(\mathcal{B}_{2\Nt}(1))\|{\bf a}\|^{2\Nt}\prod_{i=1}^{2\Nt}{\sqrt{d_i}}\sum_{i=1}^{2\Nt}{\frac{1}{\|{\bf a}\|\sqrt{d_i}}} \nonumber \\
&\geq{\rm Vol}(\mathcal{B}_{2\Nt}(1))\|{\bf a}\|^{2\Nt-1}2^C\frac{2}{\sqrt{d_{\min}}} \nonumber \\
& \triangleq \underline{L}(x_1,x_2,...,x_{2\Nt}),
\label{eq:underlineL}
\end{align}
where $\mathcal{B}_{2\Nt}(1)$ is a unit ball of dimension $2\Nt$, and
\begin{align}
{\rm Vol}(\mathcal{B}_{2\Nt}(1))=\frac{\pi^{\Nt}}{\Gamma(1+\Nt)}
\label{eq:Vol}
\end{align} is its volume.


As an upper bound for the numerator, we take the entire surface area of a sphere of radius $\sqrt{\genGam}$, which is given by 
\begin{align}
    A_{2\Nt}(\sqrt{\genGam})=2\Nt\frac{\pi^{\Nt}}{\Gamma(1+\Nt)}\sqrt{\genGam}^{2\Nt-1}.
\end{align}
We thus have
\begin{align}
    \rm CAP_{\rm ell}(x_1,x_2,...,x_{2\Nt})&\leq A_{2\Nt}(\sqrt{\genGam}) \nonumber \\
    &\triangleq{\overline{\rm CAP_{\rm ell}}(\sqrt{\genGam})}.
    \label{eq:upCapEll}
\end{align}
\begin{figure}
\begin{center}
\includegraphics[width=\columnwidth,trim={0 2.6cm 0 2.2cm},clip]{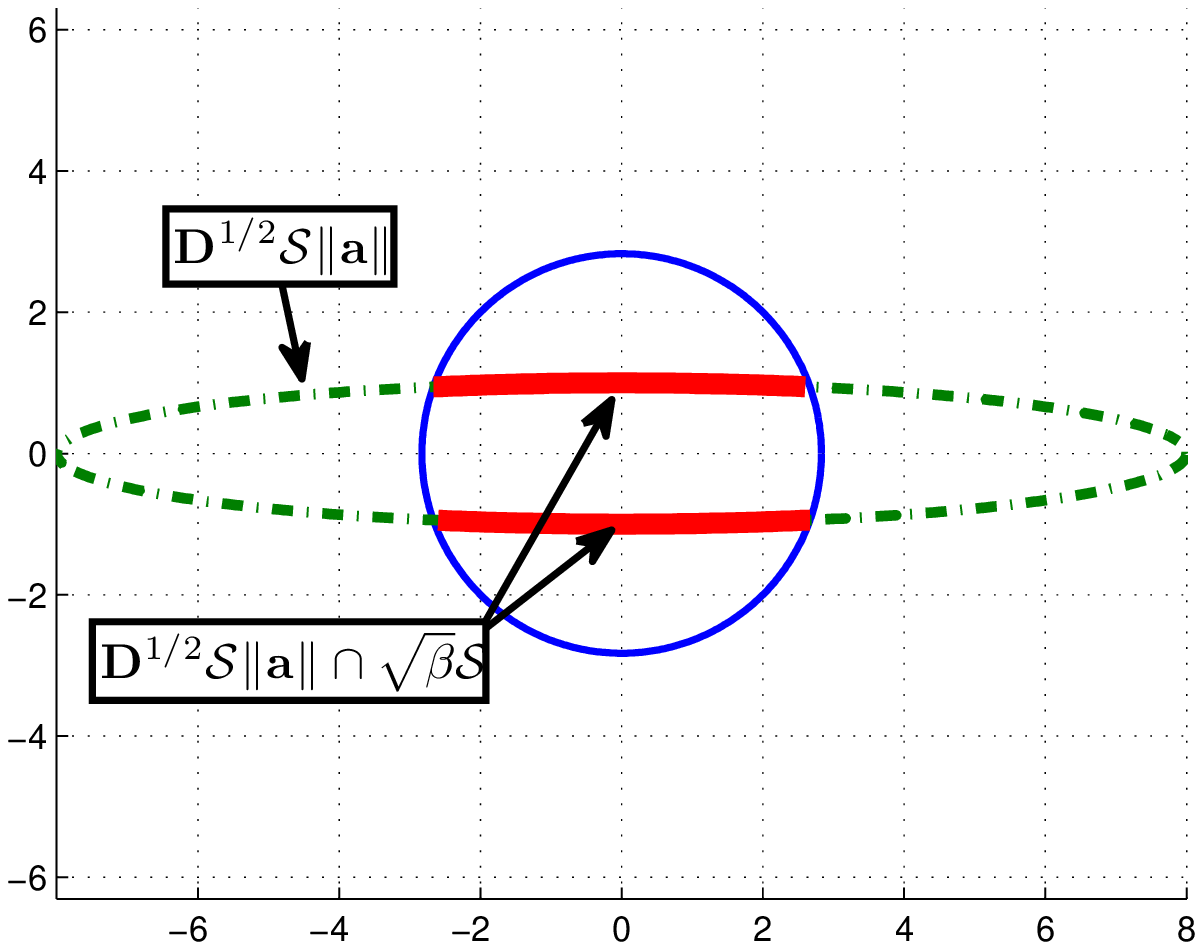}
\end{center}
\caption{Illustration of the geometric  objects appearing in (\ref{eq:ellipsCirecEq}).}
\label{fig:fig1}
\end{figure}
We may therefore bound (\ref{eq:ellipsCirecEq}) by
\begin{align}
\prob\left(\|\svv{D}^{1/2}{\bf o}_{\|{\bf a}\|}\|<\sqrt{\genGam}\right)\leq\frac{\overline{\rm CAP_{\rm ell}}(\sqrt{\genGam})}{\underline{L}(x_1,x_2,...,x_{2\Nt})}.
\label{eq:boudnWithUnderAndUp}
\end{align}
Substituting (\ref{eq:underlineL}), (\ref{eq:upCapEll}) into (\ref{eq:boudnWithUnderAndUp}) yields
\begin{align}
    &\sum_{{\bf a}\in\mathbb{A}(\genGam,d_{\min};2N_t)}\prob\left(\|\svv{D}^{1/2}{\bf o}_{\|{\bf a}\|}\|<\sqrt{\genGam}\right)  \nonumber \\
    &\leq \sum_{{\bf a}\in\mathbb{A}(\genGam,d_{\min};2N_t)}\frac{2\Nt\frac{\pi^{\Nt}}{\Gamma(1+\Nt)}\sqrt{\genGam}^{2\Nt-1}}{\frac{\pi^{\Nt}}{\Gamma(1+\Nt)}
        \|{\bf a}\|^{2\Nt-1}2^C\frac{2}{\sqrt{d_{\min}}}}  \nonumber \\
    & = \sum_{{\bf a}\in\mathbb{A}(\genGam,d_{\min};2N_t)}\frac{2\Nt\sqrt{\genGam}^{2\Nt-1}}{\|{\bf a}\|^{2\Nt-1}2^C\frac{2}{\sqrt{d_{\min}}}}.
    \label{eq:explicitSummation}
\end{align}
Substituting $\genGam=2^{\frac{C-\Delta C}{\Nt}}\alNt$, we finally arrive at
\begin{align}
 & \prob\left(R_{\rm IF}(\svv{D},\svv{V})<C-\Delta C\right) \nonumber \\
 & \leq \sum_{{\bf a}\in\mathbb{A}(\genGam,d_{\min};2N_t)}\prob\left(\|\svv{D}^{1/2}{\bf o}_{\|{\bf a}\|}\|<\sqrt{\genGam}\right)  \nonumber \\
 & \leq \sum_{{\bf a}\in\mathbb{A}(\genGam,d_{\min};2N_t)}\frac{2\Nt\left(2^{\frac{C-\Delta C}{\Nt}}\alNt\right)^{\Nt-1/2}}{\|{\bf a}\|^{2\Nt-1}2^C\frac{2}{\sqrt{d_{\min}}}}.
\end{align}
\end{proof}
The bound of Lemma~\ref{lem:lem2} is depicted in Figure~\ref{fig:lem1NoCountOpt}. Rather than plotting the outage probability, its complement is depicted, i.e., we plot the cumulative distribution function of the event that the rate is achieved by IF. For given $C$ and $\Delta C$, Lemma~\ref{lem:lem2} was numerically calculated over a grid of singular values. For each such vector of singular values, summation was performed over all ${\bf a}\in\mathbb{A}(\genGam,d_{\min};2N_t)$. The worst-case outage probability over all vectors of singular values from the grid is presented.

In addition, empirical (Monte Carlo) results are also plotted.
For each vector of singular values, a large number of random unitary matrices was drawn and the outage probability was calculated.
The integer matrix was derived using the LLL algorithm.\footnote{Advanced techniques are known (see, e.g.,\cite{sakzad:2013complex} and \cite{Fischer:2016}) that can be used to further improve the empirical results.}
The worst case outage probability over all tested (i.e., those belonging to the grid) singular values is presented.

 As a further reference, the figure also depicts the universal guaranteed gap-to-capacity derived in \cite{OrdentlichErez:IFUniversallyAchievesCapacityUpToGap:2013}, which for the case of $\Nt=2$ amounts to $\Delta C=15.24$ bits\cite{OrdentlichErez:IFUniversallyAchievesCapacityUpToGap:2013}.\footnote{This upper bound on the gap-to-capacity is guaranteed for a different coding scheme than that considered in this paper, where space-time pre-processing is employed. Nevertheless, it serves as a useful benchmark.}


While Lemma~\ref{lem:lem2} provides an explicit bound on the outage probability, in order to calculate it, one needs to go over all diagonal matrices in $\mathbb{D}(C;2N_t)$ and for each diagonal matrix, sum over all the relevant integer vectors in $\mathbb{A}(\genGam,d_{\min};2N_t)$.
Hence, the bound can be evaluated only for moderate values of capacity and for a small number of transmit antennas. The following theorem  provides (a looser) simple
closed-form bound. Furthermore, this bound does not depend on capacity but rather only on the number of transmit antennas and the gap-to-capacity.
\begin{figure}
\begin{center}
\includegraphics[width=\columnwidth]{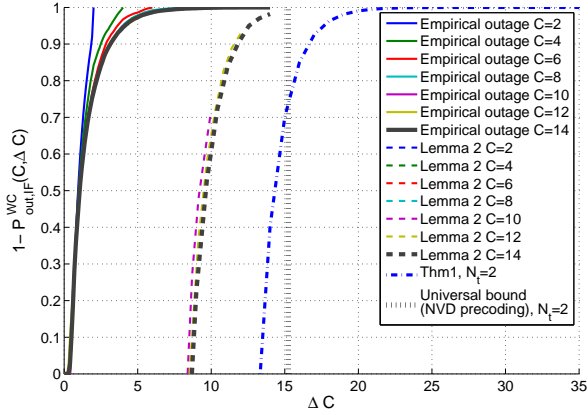}
\end{center}
\caption{Comparison of (worst-case) empirical results, Lemma~\ref{lem:lem2} and Theorem~\ref{thm:thm1} for two transmit antennas and for various values of WI mutual information.}
\label{fig:lem1NoCountOpt}
\end{figure}

\begin{theorem}
For any  complex Gaussian MIMO channel with $N_t$ transmit antennas and with WI mutual information $C$, and for $\svv{V}_c$ drawn from the CUE (inducing a real-valued linear pre-processing matrix $\svv{V}$), the outage probability of integer forcing is upper bounded by
\begin{align}
P^{\rm WC}_{\rm out,IF}\left(C,\Delta C\right)\leq c(\Nt)2^{-\Delta C},
\label{eq:thm1}
\end{align}
where
\begin{align}
    c(\Nt)=\left(2\Nt+\left(1+\sqrt{2\Nt}\right)^{2\Nt}\right)\Nt \alNt^{\Nt}\frac{\pi^{\Nt}}{\Gamma(\Nt+1)}
\end{align}
and
\begin{align}
\alNt=\frac{2\Nt+3}{4}\left(\frac{2}{\pi}\Gamma\left(2+\Nt\right)^{1/\Nt}\right)^2.
\end{align}
Thus, $c(\Nt)$ is a constant that depends only on $\Nt$.
\label{thm:thm1}
\end{theorem}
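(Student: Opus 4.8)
The plan is to start from the explicit estimate of Lemma~\ref{lem:lem2}, take the supremum over $\svv{D}\in\mathbb{D}(C;2\Nt)$ required by \eqref{eq:rewirte6withD}, and show that after simplification the bound loses all dependence on $C$ and on the singular-value profile. Write $V=\mathrm{Vol}(\mathcal{B}_{2\Nt}(1))=\pi^{\Nt}/\Gamma(\Nt+1)$ and introduce the radius $R=\sqrt{\genGam/d_{\min}}$, so that $\mathbb{A}(\genGam,d_{\min};2\Nt)=\{\mathbf{a}\in\mathbb{Z}^{2\Nt}:0<\|\mathbf{a}\|<R\}$ is exactly the summation range below and $\sqrt{d_{\min}}=\sqrt{\genGam}/R$. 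Substituting $\genGam=2^{(C-\Delta C)/\Nt}\alNt$ from \eqref{eq:beta} into \eqref{eq:47} and using $\genGam^{\Nt}=2^{C-\Delta C}\alNt^{\Nt}$, the factor $2^{C}$ and the powers of $d_{\min}$ cancel, and the right-hand side collapses to
\begin{align}
\Nt\,\alNt^{\Nt}\,2^{-\Delta C}\,\frac{S(R)}{R},
\qquad
S(R)\triangleq\sum_{\substack{\mathbf{a}\in\mathbb{Z}^{2\Nt}\\ 0<\|\mathbf{a}\|<R}}\frac{1}{\|\mathbf{a}\|^{2\Nt-1}}.
\end{align}
Since this depends on $\svv{D}$ only through $R$, the supremum over $\mathbb{D}(C;2\Nt)$ reduces to the single scalar task of bounding $S(R)/R$ by a constant depending on $\Nt$ alone.

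The heart of the argument is then the lattice-point inequality $S(R)\le\bigl(2\Nt+(1+\sqrt{2\Nt})^{2\Nt}\bigr)V\,R$, whose two summands have a transparent meaning. Splitting $\mathbb{Z}^{2\Nt}\setminus\{\mathbf{0}\}$ at the radius $1+\tfrac{\sqrt{2\Nt}}{2}$, the inner points $\|\mathbf{a}\|\le 1+\tfrac{\sqrt{2\Nt}}{2}$ each contribute at most $1$ (as $\|\mathbf{a}\|\ge1$), while their unit cubes $\mathbf{a}+[-\tfrac12,\tfrac12)^{2\Nt}$ are disjoint and lie inside $\mathcal{B}_{2\Nt}(1+\sqrt{2\Nt})$; hence their number, and so their total contribution, is at most $\mathrm{Vol}(\mathcal{B}_{2\Nt}(1+\sqrt{2\Nt}))=(1+\sqrt{2\Nt})^{2\Nt}V$, which is the source of the second term. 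The leading coefficient $2\Nt$ originates from comparing the remaining (bulk) points to the radial integral
\begin{align}
\int_{\mathcal{B}_{2\Nt}(R)}\frac{d\mathbf{x}}{\|\mathbf{x}\|^{2\Nt-1}}
=\int_{0}^{R}\frac{2\Nt\,V\,r^{2\Nt-1}}{r^{2\Nt-1}}\,dr
=2\Nt\,V\,R.
\end{align}

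The delicate step, and the main obstacle, is making this bulk sum-to-integral comparison rigorous: because $\|\mathbf{x}\|^{-(2\Nt-1)}$ is radially decreasing but tangentially concave and singular at the origin, the naive per-cube Jensen bound $\|\mathbf{a}\|^{-(2\Nt-1)}\le\int_{Q_{\mathbf a}}\|\mathbf{x}\|^{-(2\Nt-1)}d\mathbf{x}$ does \emph{not} hold. I would circumvent it with a clean monotone cube comparison: the cubes $Q_{\mathbf a}$ for $0<\|\mathbf{a}\|<R$ are disjoint, lie in $\mathcal{B}_{2\Nt}(R+\tfrac{\sqrt{2\Nt}}{2})$, and the farthest corner of $Q_{\mathbf a}$ sits at distance $\|\mathbf{a}\|+\tfrac{\sqrt{2\Nt}}{2}$, so $\int_{Q_{\mathbf a}}\|\mathbf{x}\|^{-(2\Nt-1)}d\mathbf{x}\ge(\|\mathbf{a}\|+\tfrac{\sqrt{2\Nt}}{2})^{-(2\Nt-1)}$. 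Summing and invoking the integral above gives $\sum(\|\mathbf{a}\|+\tfrac{\sqrt{2\Nt}}{2})^{-(2\Nt-1)}\le 2\Nt\,V(R+\tfrac{\sqrt{2\Nt}}{2})$, and the uniform factor $\bigl(\tfrac{\|\mathbf{a}\|+\sqrt{2\Nt}/2}{\|\mathbf{a}\|}\bigr)^{2\Nt-1}\le(1+\tfrac{\sqrt{2\Nt}}{2})^{2\Nt-1}$ (valid since $\|\mathbf{a}\|\ge1$) converts this into $S(R)\le 2\Nt(1+\tfrac{\sqrt{2\Nt}}{2})^{2\Nt}V\,R$ for $R\ge1$; the elementary check $2\Nt(1+\tfrac{\sqrt{2\Nt}}{2})^{2\Nt}\le 2\Nt+(1+\sqrt{2\Nt})^{2\Nt}$ then recovers the stated constant.

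To assemble: for $R\le1$ the sum $S(R)$ is empty and the bound is $0$, while for $R>1$ we get $S(R)/R\le\bigl(2\Nt+(1+\sqrt{2\Nt})^{2\Nt}\bigr)V$ uniformly in $R$, hence uniformly in $\svv{D}$; multiplying by $\Nt\,\alNt^{\Nt}\,2^{-\Delta C}$ reproduces $c(\Nt)\,2^{-\Delta C}$. Finally, Lemma~\ref{lem:lem2} uses the piecewise $\alNt$ (the exact Hermite constants where known); since these are no larger than the Blichfeldt bound \eqref{eq:blich}, replacing $\alNt$ by the single closed-form expression stated in the theorem only loosens the estimate, which yields the asserted $\alNt$ and completes the proof.
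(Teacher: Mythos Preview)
Your proof is correct and shares the paper's overall strategy: start from Lemma~\ref{lem:lem2}, rewrite its right-hand side as $\Nt\,\alNt^{\Nt}\,2^{-\Delta C}\cdot S(R)/R$ with $R=\sqrt{\genGam/d_{\min}}$, and then show that $S(R)/R$ is bounded by $\bigl(2\Nt+(1+\sqrt{2\Nt})^{2\Nt}\bigr)\mathrm{Vol}(\mathcal{B}_{2\Nt}(1))$ uniformly in $R$, so that the supremum over $\svv{D}\in\mathbb{D}(C;2\Nt)$ becomes vacuous.

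The difference lies in how the lattice-sum estimate is established. The paper performs a shell decomposition $k<\|\mathbf{a}\|\le k+1$, bounds the number of points in each shell by an external counting lemma (essentially the volume of the annulus inflated by $\sqrt{2\Nt}/2$), and then splits the resulting sum over $k$ into three ranges to exhibit constants $c_1=(1+\sqrt{2\Nt})^{2\Nt}$ and $c_2\le c_1$ that control each range. Your route bypasses the shell decomposition entirely: the single cube-to-integral comparison $\int_{Q_{\mathbf a}}\|\mathbf x\|^{-(2\Nt-1)}d\mathbf x\ge(\|\mathbf a\|+\tfrac{\sqrt{2\Nt}}{2})^{-(2\Nt-1)}$ already yields $S(R)\le 2\Nt(1+\tfrac{\sqrt{2\Nt}}{2})^{2\Nt}V\,R$ for $R\ge 1$, and the final numeric check $2\Nt(1+\tfrac{\sqrt{2\Nt}}{2})^{2\Nt}\le 2\Nt+(1+\sqrt{2\Nt})^{2\Nt}$ (indeed valid for every $\Nt\ge 1$) recovers the stated constant. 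Your argument is self-contained and avoids the case analysis; as a byproduct it actually produces a constant tighter than the one in Remark~\ref{rem:private}. The paper's shell approach, on the other hand, makes the two-term structure $2\Nt+(1+\sqrt{2\Nt})^{2\Nt}$ arise directly rather than via an a posteriori comparison.
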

\begin{proof}
See Appendix~\ref{sec:proofOfTheorem1}.
\end{proof}

This bound is also depicted in Figure~\ref{fig:lem1NoCountOpt} for the case of two transmit antennas. For other values of $\Nt$, the bound is depicted in Figure~\ref{fig:fig2} (solid lines).\footnote{A slightly tightened version of Theorem~\ref{thm:thm1}, as described in Remark~\ref{rem:private} in Appendix~\ref{sec:proofOfTheorem1},  is used to generate Figures~\ref{fig:lem1NoCountOpt}~and~\ref{fig:fig2}.}
Recall again that, for $\Nt=2,3,4,$ we use the actual values of $\gamma_4=\sqrt{2}$, $\gamma_6=\left(\frac{64}{3}\right)^{1/6}$ and $\gamma_8=2$, rather than the bound of \cite{minimumValueOfQuadraticForms:Blichfeldt1929}. We note that when the number of transmit antennas increases, the gap between the theorem and the empirical results grows mainly due to the penalty incurred in (\ref{eq:Lagarias}) from using the dual lattice.

\begin{figure}
\centering
\includegraphics[width=\columnwidth]{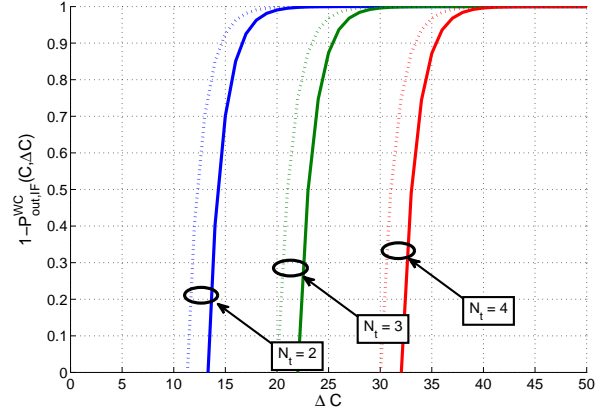}
\caption{The  performance guaranteed by Theorem~\ref{thm:thm1} for $\Nt=2,3,4$ transmit antennas. The dashed lines correspond to the improvement provided by Corollary 2.}
\label{fig:fig2}
\end{figure}

\subsection{Improved Upper Bounds}
A close inspection of Theorem~\ref{thm:thm1} reveals that there are two main sources for looseness in the bound that may be further tightened:
\begin{itemize}
\item
    \emph{Union bound} - While there is an inherent loss in the union bound, some terms in the summation (\ref{eq:outageProbReal}) may be completely dropped. See Corollary~\ref{col:col1} below.
\item
    \emph{Dual lattice} - Bounding via the dual lattice induces a loss reflected in (\ref{eq:Lagarias}). This may be circumvented for the case of two  transmit antennas, as accomplished (along with other improvements) in Theorem~\ref{thm:thm2} below.
\end{itemize}

We first tighten the union bound. As expressed in~(\ref{eq:unionOfAllIntVectors}), the event where the first of the successive minima is smaller than $\sqrt{\genGam}$ is equivalent to going over all integer vectors and checking whether any of them meet the norm condition. However, going over \emph{all} integer vectors is superfluous. In case that an integer vector $\mathbf{b}\in\mathbb{A}(\genGam,d_{\min};2N_t)$ is an integer multiple of another integer vector $\mathbf{a}\in\mathbb{A}(\genGam,d_{\min};2N_t)$, there is no need to count both of them. Rather, it suffices to  include in the union bound only the event corresponding to  $\mathbf{a}$. 

It follows that one may replace the set $\mathbb{A}(\genGam,d_{\min};2N_t)$ appearing in the summation in (\ref{eq:lemma2}) by a smaller set $\mathbb{B}(\genGam,d_{\min};2N_t)$ where
\begin{align}
&\mathbb{B}(\genGam,d;2N_t) \triangleq \nonumber \\
&\left\{{\bf a}\in \mathbb{Z}^{2N_t}:0<\|{\bf a}\|<\sqrt{\frac{\genGam}{d}} \:  {\rm and} \: \nexists 0<c<1 \: {\rm s.t.} \: c{\bf a}\in\mathbb{Z}^{2N_t} \right\}
\end{align}
as described by the next corollary.


\begin{corollary}
For any complex Gaussian MIMO channel with $N_t$ transmit  antennas and for $\svv{V}_c$ drawn from the  CUE (inducing a real-valued linear pre-processing matrix $\svv{V}$), the outage probability of integer forcing is upper bounded by
\begin{align}
 & \prob\left(R_{\rm IF}(\svv{D},\svv{V})<C-\Delta C\right) \nonumber \\
 & \leq \sum_{{\bf a}\in\mathbb{B}(\genGam,d_{\min} ;2N_t)}\frac{2\Nt\left(2^{\frac{C-\Delta C}{\Nt}}\alNt\right)^{\Nt-1/2}}{\|{\bf a}\|^{2\Nt-1}2^C\frac{2}{\sqrt{d_{\min}}}}.
\end{align}
where
$\genGam=2^{\frac{C-\Delta C}{\Nt}}\alNt$.
\label{col:col1}
\end{corollary}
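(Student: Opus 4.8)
The plan is to re-derive the union bound of Lemma~\ref{lem:lem2} while discarding redundant terms, relying only on the positive homogeneity of the map $\mathbf{a}\mapsto\|\svv{D}^{1/2}\svv{V}\mathbf{a}\|$. Recall from (\ref{eq:unionOfAllIntVectors}) that the outage event $\{\lambda_1(\Lambda^*)<\sqrt{\genGam}\}$ is exactly the union over all nonzero integer vectors of the events $\{\|\svv{D}^{1/2}\svv{V}\mathbf{a}\|<\sqrt{\genGam}\}$. The first step is to observe that every nonzero $\mathbf{b}\in\mathbb{Z}^{2N_t}$ factors uniquely as $\mathbf{b}=k\mathbf{a}$ with $k$ a positive integer (the gcd of its coordinates) and $\mathbf{a}$ primitive, and that primitivity of $\mathbf{a}$ is precisely the condition that no $0<c<1$ satisfies $c\mathbf{a}\in\mathbb{Z}^{2N_t}$, which is exactly the constraint defining $\mathbb{B}(\genGam,d_{\min};2N_t)$.

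Second, the per-vector norm is positively homogeneous, $\|\svv{D}^{1/2}\svv{V}(k\mathbf{a})\|=k\|\svv{D}^{1/2}\svv{V}\mathbf{a}\|$, so for $k\geq 1$ the containment $\{\|\svv{D}^{1/2}\svv{V}\mathbf{b}\|<\sqrt{\genGam}\}\subseteq\{\|\svv{D}^{1/2}\svv{V}\mathbf{a}\|<\sqrt{\genGam}\}$ holds. Hence the event associated with any non-primitive $\mathbf{b}$ is already contained in the event of its primitive generator $\mathbf{a}$, and the union over all nonzero integer vectors coincides with the union taken only over primitive vectors; the probability on the left-hand side is therefore unchanged. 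Because $\|\mathbf{a}\|\leq\|\mathbf{b}\|$, whenever $\mathbf{b}$ meets the truncation radius $\|\mathbf{b}\|<\sqrt{\genGam/d_{\min}}$ so does its generator, so the primitive generator of any element of $\mathbb{A}(\genGam,d_{\min};2N_t)$ again lies in $\mathbb{A}(\genGam,d_{\min};2N_t)$, and thus in $\mathbb{B}(\genGam,d_{\min};2N_t)$. Restricting the index set from $\mathbb{A}$ to $\mathbb{B}$ therefore loses nothing.

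Finally, I would apply the union bound to this reduced collection and reuse, verbatim, the single-term estimate already established in Lemma~\ref{lem:lem2}, namely the bound on $\prob(\|\svv{D}^{1/2}\mathbf{o}_{\|\mathbf{a}\|}\|<\sqrt{\genGam})$ furnished by (\ref{eq:boudnWithUnderAndUp})–(\ref{eq:explicitSummation}); this estimate is valid term-by-term, so it carries over with the summation index ranging over $\mathbb{B}$ rather than $\mathbb{A}$. Substituting $\genGam=2^{\frac{C-\Delta C}{\Nt}}\alNt$ then yields the stated inequality. I expect no analytic obstacle: the only point requiring care is the bookkeeping of verifying that the primitivity condition in $\mathbb{B}$ is exactly the $k=1$ case and that the primitive generator never exits the truncation ball, after which the argument is a strict sub-union of the one proving Lemma~\ref{lem:lem2} and all per-term bounds are inherited unchanged.
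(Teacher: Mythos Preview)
Your proposal is correct and follows essentially the same approach as the paper: both arguments observe that when $\mathbf{b}=k\mathbf{a}$ with $k\geq 1$ an integer and $\mathbf{a}$ primitive, the event $\{\|\svv{D}^{1/2}\svv{V}\mathbf{b}\|<\sqrt{\genGam}\}$ is contained in $\{\|\svv{D}^{1/2}\svv{V}\mathbf{a}\|<\sqrt{\genGam}\}$, so the union bound may be restricted to primitive vectors and the per-term estimate from Lemma~\ref{lem:lem2} carried over unchanged. Your write-up is in fact more careful than the paper's, which simply states the containment observation without the explicit verification that the primitive generator remains inside the truncation ball.
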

%


A simpler restriction of the set $\mathbb{A}(\genGam,d;2N_t)$, short of reducing it to $\mathbb{B}(\genGam,d;2N_t)$, is obtained by noting that $\svv{D}$ and $\svv{V}$ are the real representations of complex matrices. Using the notations of (\ref{eq:realRepOfH}),  the integer vector ${\bf a}$ may be viewed as the real representation of the complex vector ${\bf a}_c$. Thus,
\begin{align}
\|\svv{D}_c^{1/2}\svv{V}_c{\bf a}_c\|=\|\svv{D}^{1/2}\svv{V}{\bf a}\|.
\end{align}
As multiplication of ${\bf a}_c$ by $\{-1,j,-j\}$ does not change the value of $\|\svv{D}_c^{1/2}\svv{V}_c{\bf a}_c\|$ (and equivalently, it does not change the value of $\|\svv{D}^{1/2}\svv{V}{\bf a}\|$), it suffices to include only one of these members of $\mathbb{A}(\genGam,d;2N_t)$ in the summation. Hence, a simple multiplicative improvement may be obtained.
\begin{corollary}
For any complex Gaussian MIMO channel with $N_t$ transmit antennas and for $\svv{V}_c$ drawn from the  CUE (inducing a real-valued linear pre-processing matrix $\svv{V}$), the outage probability of integer forcing is upper bounded by
\begin{align}
& \prob\left(R_{\rm IF}(\svv{D},\svv{V})<C-\Delta C\right) \nonumber  \\
& \leq \frac{1}{4}\sum_{{\bf a}\in\mathbb{A}(\genGam,d_{\min};2N_t)}\frac{2\Nt\genGam^{\Nt-1/2}}{\|{\bf a}\|^{2\Nt-1}2^C\frac{2}{\sqrt{d_{\min}}}},
\end{align}
where $\genGam=2^{\frac{C-\Delta C}{\Nt}}\alNt$.
\label{col:col2}
\end{corollary}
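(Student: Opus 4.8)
The plan is to refine the union bound from the proof of Lemma~\ref{lem:lem2} by exploiting a symmetry of its summand under multiplication of the underlying complex integer vector by the fourth roots of unity $\{1,-1,j,-j\}$. Recall that in the proof of Lemma~\ref{lem:lem2} the outage event $\{\lambda_1(\Lambda^*)<\sqrt{\genGam}\}$ was written as the union $\bigcup_{{\bf a}\in\mathbb{Z}^{2\Nt}\setminus\{{\bf 0}\}}\{\|\svv{D}^{1/2}\svv{V}{\bf a}\|<\sqrt{\genGam}\}$ and bounded termwise, with only the vectors ${\bf a}\in\mathbb{A}(\genGam,d_{\min};2N_t)$ contributing. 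The observation I would formalize is that, for each such ${\bf a}$ viewed as the real representation of a complex vector ${\bf a}_c$, the three companions obtained by multiplying ${\bf a}_c$ by $-1$, $j$, and $-j$ index \emph{the same event} in the union, so three of every four terms are redundant.

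First I would make the group action explicit at the level of real representations. Writing ${\bf a}_c={\bf a}_R+j{\bf a}_I$ with ${\bf a}_R,{\bf a}_I\in\mathbb{Z}^{\Nt}$, the real representation is ${\bf a}=[{\bf a}_R^T,{\bf a}_I^T]^T$, and multiplication by $j$ sends ${\bf a}_c$ to $-{\bf a}_I+j{\bf a}_R$, i.e.\ it acts on ${\bf a}$ by ${\bf a}\mapsto[-{\bf a}_I^T,{\bf a}_R^T]^T$. This map is integer-preserving and Euclidean-norm-preserving, so it maps $\mathbb{A}(\genGam,d_{\min};2N_t)$ bijectively onto itself; iterating it realizes the full cyclic group of order four generated by multiplication by $j$. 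Since $|u|=1$ for $u\in\{1,-1,j,-j\}$ and complex scalars factor out of the complex norm, $\|\svv{D}_c^{1/2}\svv{V}_c(u{\bf a}_c)\|=\|\svv{D}_c^{1/2}\svv{V}_c{\bf a}_c\|$, and by the real-representation identity $\|\svv{D}_c^{1/2}\svv{V}_c{\bf a}_c\|=\|\svv{D}^{1/2}\svv{V}{\bf a}\|$ recorded just before the corollary, the same holds verbatim for the real norm. Hence the four events indexed by an orbit coincide exactly as subsets of the probability space.

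Next I would verify that every orbit has size exactly four: a nontrivial stabilizer would force $u{\bf a}_c={\bf a}_c$ for some $u\in\{-1,j,-j\}$, which requires ${\bf a}_c={\bf 0}$, so no nonzero vector is fixed and $\mathbb{A}(\genGam,d_{\min};2N_t)$ partitions into disjoint orbits of size four. Because the events within an orbit coincide, the union over $\mathbb{A}(\genGam,d_{\min};2N_t)$ equals the union over one representative per orbit, and the union bound then needs only a single term per orbit. Equivalently, the summand $\prob(\|\svv{D}^{1/2}\svv{V}{\bf a}\|<\sqrt{\genGam})$ depends on ${\bf a}$ only through this norm and is therefore constant on each orbit, so summing over the full set overcounts each orbit's contribution exactly fourfold; retaining the full sum with a prefactor of $\tfrac14$ thus still dominates the true outage probability.

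Finally I would substitute, for each surviving term, the geometric ellipsoid-cap upper bound on $\prob(\|\svv{D}^{1/2}{\bf o}_{\|{\bf a}\|}\|<\sqrt{\genGam})$ established in~(\ref{eq:explicitSummation}); this bound depends on ${\bf a}$ only through $\|{\bf a}\|$ and is thus itself orbit-invariant, so it passes cleanly through the $\tfrac14$ reduction to yield the claimed bound. The only step demanding care is the bookkeeping that multiplication by $j$ genuinely preserves the defining constraint $0<\|{\bf a}\|<\sqrt{\genGam/d_{\min}}$ of $\mathbb{A}(\genGam,d_{\min};2N_t)$ and that the orbits are clean (size four, pairwise disjoint); once the action is set up on the real representation these are immediate, and no analytic estimate beyond Lemma~\ref{lem:lem2} is required.
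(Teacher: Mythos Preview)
Your proposal is correct and follows essentially the same approach as the paper: the paper's justification of Corollary~\ref{col:col2} is precisely the observation that multiplication of ${\bf a}_c$ by $\{-1,j,-j\}$ leaves $\|\svv{D}_c^{1/2}\svv{V}_c{\bf a}_c\|=\|\svv{D}^{1/2}\svv{V}{\bf a}\|$ unchanged, so only one representative per orbit of size four need be retained in the union bound. Your write-up is more explicit about the group action on real representations and the orbit sizes, but the underlying idea is identical.
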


While the improvement of Corollary~\ref{col:col1} depends on $\genGam$ (and hence also on $C$), we may tighten Theorem~\ref{thm:thm1} by invoking Corollary~\ref{col:col2} as shown by the dashed lines in Figure~\ref{fig:fig2}. For a given value of $C$, we may combine the two corollaries. Figure~\ref{fig:2x2complexShowingAddedValueOfCol1} shows the different bounds on the outage probability for the case of a MIMO channel with two  antennas and with $C=14$, where both corollaries are utilized for tightening Lemma~\ref{lem:lem2}.



\begin{figure}
\begin{center}
\includegraphics[width=\columnwidth]{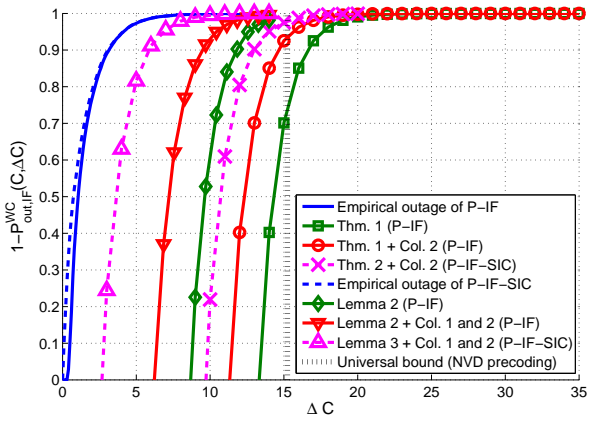}
\end{center}
\caption{Outage bounds for channels with $N_t=2$ transmit antennas and with WI mutual information $C=14$ bits.}
\label{fig:2x2complexShowingAddedValueOfCol1}
\end{figure}

As mentioned above, there is an additional significant loss due to using the dual lattice for deriving both Lemma~\ref{lem:lem2} and Theorem~\ref{thm:thm1}. For the case of $\Nt=2$, this loss may be circumvented by analyzing the performance of IF-SIC. When using IF-SIC, (\ref{eq:P_WC_OUT}) can be rewritten as
\begin{align}
& P_{\rm out,IF-SIC}^{\rm WC}\left(C,C-\Delta C\right) \nonumber \\
& = \sup_{\svv{D}\in\mathbb{D}(C;2N_t)}\prob\left(R_{\rm IF-SIC}(\svv{D},\svv{V})<C-\Delta C\right).
\label{eq:rewirte6withD_SIC}
\end{align}
The next lemma provides a bound on the outage probability of IF-SIC.

\begin{lemma}
\label{lem:lem3}
For any complex Gaussian MIMO channel with $N_t$ transmit antennas and with white-input mutual information $C>1$, i.e., $\svv{D}\in\mathbb{D}(C;2N_t)$, and for $\svv{V}_c$ drawn from the CUE (inducing a real-valued linear pre-processing matrix $\svv{V}$), the outage probability of integer forcing with successive interference cancellation is upper bounded by
\begin{align}
& \prob\left(R_{\rm IF-SIC}(\svv{D},\svv{V})<C-\Delta C\right) \nonumber \\
& \leq  \sum_{{\bf a}\in\mathbb{A}(\genGam,1/d_{\max};2N_t)}\frac{2\pi^2{2^{\nicefrac{-3}{4}(C+\Delta C)}}}{{\pi^2}\frac{\aNorm^3}{2^C}{\sqrt{d_{\max}}}},
\end{align}
where $\mathbb{A}(\genGam,1/d_{\max};2N_t)$ is defined in (\ref{eq:A_beta_d}), $\genGam=2^{\nicefrac{-1}{2}(C+\Delta C)}$ and for all $\Delta C>1$.

\end{lemma}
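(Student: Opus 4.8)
The plan is to mirror the proof of Lemma~\ref{lem:lem2}, but to work directly with the \emph{primal} lattice $\Lambda$ spanned by $\svv{G}=\svv{D}^{-1/2}\svv{V}^T$, thereby avoiding the dual-lattice transference penalty~(\ref{eq:Lagarias}); this is precisely what makes the IF-SIC bound tighter, and is the reason the argument is tailored to $\Nt=2$. The crux is the first step: reducing the IF-SIC outage event to a \emph{first}-minimum event on $\Lambda$. From~(\ref{eq:IF-SIC-rate}) and~(\ref{eq:Kzz}) we have $R_{\rm IF-SIC}(\svv{D},\svv{V})=-\Nt\log\left(\min_{\svv{A}}\max_{m}\ell_{m,m}^2\right)$, where the $\ell_{m,m}$ are the Gram--Schmidt lengths of the basis of $\Lambda$ generated by the rows of $\svv{A}\svv{V}\svv{D}^{-1/2}$, so that $\prod_m\ell_{m,m}=|\det\svv{G}|$ is fixed.

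For $\Nt=2$ the lattice $\Lambda$ is the real image of a \emph{two-complex-dimensional} lattice with generator $\svv{G}_c=\svv{D}_c^{-1/2}\svv{V}_c^H$, for which complex Gauss (Lagrange) reduction is exact. Putting the shortest complex vector first yields complex Gram--Schmidt lengths $\ell_1^{(c)}=\lambda_1(\Lambda)$ and $\ell_2^{(c)}=|\det\svv{G}_c|/\lambda_1(\Lambda)$, with $|\det\svv{G}_c|=2^{-C/2}$, and hence a real basis (a legitimate choice of integer $\svv{A}$) achieving $\max_m\ell_{m,m}^2\le 2^{-C}/\lambda_1^2(\Lambda)$ whenever $\lambda_1(\Lambda)\le 2^{-C/4}$. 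Substituting this upper bound into $\min_{\svv{A}}\max_m\ell_{m,m}^2$ gives $R_{\rm IF-SIC}\ge \Nt\bigl(C+\log\lambda_1^2(\Lambda)\bigr)$, so that $\lambda_1^2(\Lambda)\ge\genGam\triangleq 2^{-(C+\Delta C)/2}$ forces $R_{\rm IF-SIC}\ge C-\Delta C$; the complementary regime $\lambda_1(\Lambda)>2^{-C/4}$ is handled by the complex Hermite bound, where the hypotheses $C>1$ and $\Delta C>1$ guarantee no outage. Taking the contrapositive, the outage event is contained in $\{\lambda_1(\Lambda)<\sqrt{\genGam}\}$.

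The remaining two steps then copy the proof of Lemma~\ref{lem:lem2} almost verbatim. I would write $\{\lambda_1(\Lambda)<\sqrt{\genGam}\}=\bigcup_{\mathbf{a}\in\mathbb{Z}^{2\Nt}\setminus\{\mathbf{0}\}}\{\|\svv{D}^{-1/2}\svv{V}^T\mathbf{a}\|<\sqrt{\genGam}\}$ and apply the union bound; since $\|\svv{D}^{-1/2}\svv{V}^T\mathbf{a}\|\ge\|\mathbf{a}\|/\sqrt{d_{\max}}$, each term vanishes once $\|\mathbf{a}\|\ge\sqrt{\genGam\, d_{\max}}$, which restricts the sum to $\mathbb{A}(\genGam,1/d_{\max};2\Nt)$ of~(\ref{eq:A_beta_d}). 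By Lemma~\ref{lem:lem1} I may replace $\svv{V}$ by a CRE matrix and then by a uniform point on the sphere of radius $\|\mathbf{a}\|$, so each probability equals the fraction of the surface of the ellipsoid with semi-axes $x_i=\|\mathbf{a}\|/\sqrt{d_i}$ lying inside the sphere of radius $\sqrt{\genGam}$. Upper bounding the ellipsoidal cap by the full sphere area $A_{2\Nt}(\sqrt{\genGam})$ and lower bounding the surface area by the Carlson inequality (now with $\prod_i 1/\sqrt{d_i}=2^{-C}$ and $\sum_i 1/x_i\ge 2\sqrt{d_{\max}}/\|\mathbf{a}\|$) produces, for $\Nt=2$, the stated summand with $\genGam^{3/2}=2^{-3(C+\Delta C)/4}$ and the factor $\sqrt{d_{\max}}$. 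I expect the main obstacle to be the first step---pinning down the exact constant in the reduction $\min_{\svv{A}}\max_m\ell_{m,m}^2\le 2^{-C}/\lambda_1^2(\Lambda)$ and verifying that the boundary regime $\lambda_1(\Lambda)>2^{-C/4}$ together with $C,\Delta C>1$ indeed precludes outage---since the geometric estimates are identical to those already established in Lemma~\ref{lem:lem2}.
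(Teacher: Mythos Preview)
Your proposal is correct and reaches the same key reduction as the paper, namely that for $\Delta C>1$ the IF-SIC outage event is contained in $\{\lambda_1^2(\Lambda)<\beta\}$ with $\beta=2^{-(C+\Delta C)/2}$, after which the union bound and the ellipsoid/sphere surface estimates proceed exactly as in Lemma~\ref{lem:lem2}; that second half of your argument is identical to the paper's.

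The first step, however, is obtained differently. The paper does not invoke complex Gauss reduction. Instead it uses three IF-SIC rate identities from cited work: (i) the optimal $\svv{A}$ for IF-SIC is unimodular, so the four per-stream SIC rates sum to $C$; (ii) the first SIC rate equals the first IF rate $R^{1,\rm IF}=-\tfrac{1}{2}\log\lambda_1^2(\Lambda)$; and (iii) a sum-rate bound on $R^{1,\rm IF}+R^{2,\rm IF}$ that lower-bounds $R^{1,\rm IF}$ by a constant depending only on $C$. Combining these gives $R_{\rm IF-SIC}\ge\min\bigl(\text{const},\,2C-4R^{1,\rm IF}\bigr)$, and for $\Delta C$ exceeding the constant the outage is forced by the second term, which is exactly $\lambda_1^2(\Lambda)<\beta$. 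Your route---choosing the Gauss-reduced complex basis so that the Cholesky diagonal is $(\lambda_1,\lambda_1,2^{-C/2}/\lambda_1,2^{-C/2}/\lambda_1)$ and handling the regime $\lambda_1>2^{-C/4}$ with the Hermite bound $\lambda_1^2\le\gamma_4\,2^{-C/2}=\sqrt{2}\,2^{-C/2}$---is more self-contained (no external rate theorems) and pins down the threshold $\Delta C>1$ cleanly, since the Hermite case yields $R_{\rm IF-SIC}\ge C-1$ on the nose. The paper's route, by contrast, leans on the broader IF/IF-SIC rate theory; each approach buys something---yours is elementary and sharp on the constant, the paper's makes the connection to the IF rate structure explicit.
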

\begin{proof}
See Appendix~\ref{sec:BoundFor2x2channels}.
\end{proof}

\begin{remark}
Lemma~\ref{lem:lem3} can be further tightened using Corollary~\ref{col:col1}, i.e., by replacing $\mathbb{A}(\genGam,1/d_{\max};2N_t)$ with $\mathbb{B}(\genGam,1/d_{\max};2N_t)$.
\end{remark}

In a similar manner to the derivation of  Theorem~\ref{thm:thm1} using Lemma~\ref{lem:lem2}, for IF-SIC, Lemma~\ref{lem:lem3} leads to the following theorem.
\begin{theorem}
\label{thm:thm2}
For any complex Gaussian MIMO channel with $N_t$ transmit antennas
and with white-input mutual information $C>1$, i.e., $\svv{D}\in\mathbb{D}(C;2N_t)$, and for $\svv{V}_c$ drawn from the CUE (inducing a real-valued linear pre-processing matrix $\svv{V}$),  the outage probability of integer forcing with successive interference cancellation is upper bounded by
\begin{align}
P^{\rm WC}_{\rm out,IF-SIC}\left(C,\Delta C\right)\leq 85\pi^2 2^{-\Delta C},
\end{align}
for all $\Delta C>1$.
\end{theorem}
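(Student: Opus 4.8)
The plan is to mirror the derivation of Theorem~\ref{thm:thm1} from Lemma~\ref{lem:lem2}, now starting from the IF-SIC bound of Lemma~\ref{lem:lem3} (stated for the $\Nt=2$ case, so that $2\Nt=4$). First I would simplify the summand: the two factors of $\pi^2$ cancel, and collecting the powers of $2$ (note $2^{-\frac34(C+\Delta C)}\cdot 2^{C}=2^{C/4}\,2^{-3\Delta C/4}$) reduces the outage bound to
\begin{align}
\prob\left(R_{\rm IF-SIC}(\svv{D},\svv{V})<C-\Delta C\right) \leq \frac{2\cdot 2^{C/4}\,2^{-3\Delta C/4}}{\sqrt{d_{\max}}}\,S,
\end{align}
where $S \triangleq \sum_{{\bf a}\in\mathbb{A}(\genGam,1/d_{\max};2\Nt)}\aNorm^{-(2\Nt-1)}$ is a pure lattice sum over the nonzero integer points of $\mathbb{Z}^{2\Nt}$ lying inside the ball of radius $R\triangleq\sqrt{\genGam\,d_{\max}}=\sqrt{d_{\max}}\,2^{-(C+\Delta C)/4}$, with $\genGam=2^{-(C+\Delta C)/2}$. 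The entire channel- and capacity-dependence is now confined to the scalar prefactor and to the radius $R$, so the remaining task is purely to bound $S$.

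The key step, and the main obstacle, is the lattice-point sum estimate
\begin{align}
S=\sum_{\substack{{\bf a}\in\mathbb{Z}^{2\Nt}\\ 0<\aNorm<R}}\frac{1}{\aNorm^{2\Nt-1}} \leq \left(2\Nt+\left(1+\sqrt{2\Nt}\right)^{2\Nt}\right)\frac{\pi^{\Nt}}{\Gamma(\Nt+1)}\,R \qquad (R\geq 1),
\end{align}
which is exactly the counting bound already underlying the passage from Lemma~\ref{lem:lem2} to Theorem~\ref{thm:thm1} (here specialized to $2\Nt=4$, where the bracketed constant is $4+3^{4}=85$ and $\frac{\pi^{\Nt}}{\Gamma(\Nt+1)}=\frac{\pi^2}{2}$). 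I would establish it by associating to each nonzero integer point ${\bf a}$ the unit cube centered at it and comparing the sum to sphere surface areas: the ``bulk'' points contribute at most the integral of $\aNorm^{-(2\Nt-1)}$ over successive radial shells, which behaves like the unit-sphere surface area $2\Nt\,\frac{\pi^{\Nt}}{\Gamma(\Nt+1)}$ per unit of radius and so produces the linear-in-$R$ term, while the points whose cube meets the unit ball (where $\aNorm^{-(2\Nt-1)}$ is largest) all lie in the ball of radius $1+\sqrt{2\Nt}$ and hence number at most $\left(1+\sqrt{2\Nt}\right)^{2\Nt}\frac{\pi^{\Nt}}{\Gamma(\Nt+1)}$, each contributing at most $1$. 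The delicate issues are the radial distortion $\sqrt{2\Nt}/2$ incurred when fattening integer points into cubes, which is what produces the $\left(1+\sqrt{2\Nt}\right)^{2\Nt}$ factor, and the use of $R\geq 1$ to fold the resulting additive constant into the linear term.

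Finally I would close the argument by cases on $R$. If $R<1$, the set $\mathbb{A}(\genGam,1/d_{\max};2\Nt)$ contains no nonzero integer vector, so $S=0$ and the bound is trivial. If $R\geq 1$, substituting the sum estimate and $R=\sqrt{d_{\max}}\,2^{-(C+\Delta C)/4}$ makes the factor $R/\sqrt{d_{\max}}=2^{-(C+\Delta C)/4}$, and collecting exponents gives
\begin{align}
\prob\left(R_{\rm IF-SIC}(\svv{D},\svv{V})<C-\Delta C\right)\leq 2\cdot 85\cdot\tfrac{\pi^2}{2}\cdot 2^{C/4}\,2^{-3\Delta C/4}\,2^{-(C+\Delta C)/4}=85\pi^{2}\,2^{-\Delta C},
\end{align}
so that all dependence on $C$ and on the singular values cancels. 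Since the right-hand side is uniform over $\svv{D}\in\mathbb{D}(C;2\Nt)$, taking the supremum yields the claimed worst-case bound, and the restriction $\Delta C>1$ is simply inherited from the hypothesis of Lemma~\ref{lem:lem3}.
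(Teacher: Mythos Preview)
Your proposal is correct and follows essentially the same route as the paper: start from Lemma~\ref{lem:lem3}, cancel the $\pi^{2}$ factors, reduce everything to the lattice sum $S=\sum_{0<\aNorm<R}\aNorm^{-(2\Nt-1)}$ with $R=\sqrt{\genGam\,d_{\max}}$, bound $S$ linearly in $R$ via the shell-counting argument already used for Theorem~\ref{thm:thm1} (which for $2\Nt=4$ yields the constant $4+81=85$ times ${\rm Vol}(\mathcal{B}_4(1))=\pi^{2}/2$), and observe that the resulting bound is independent of $\svv{D}$. The only difference is presentational: the paper bounds $S$ by explicitly partitioning into integer shells $k<\aNorm\leq k+1$ and invoking the same integer-point count as in Appendix~\ref{sec:proofOfTheorem1}, whereas your description of the counting step is a more informal cube-versus-shell sketch, but both arrive at the identical inequality $S\leq 85\cdot\tfrac{\pi^{2}}{2}\,R$.
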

\begin{proof}
See Appendix~\ref{sec:BoundFor2x2channels}.
\end{proof}

Figure~\ref{fig:2x2complexShowingAddedValueOfCol1} depicts the improved bounds of Lemma~\ref{lem:lem3}
(incorporating the improvements provided by Corollary~\ref{col:col1} and Corollary~\ref{col:col2}) and Theorem~\ref{thm:thm2} for a system employing IF-SIC with $\Nt=2$.\footnote{A slightly tightened version of Theorem~\ref{thm:thm2}, as described in Remark~\ref{rem:rem7} in Appendix~\ref{sec:BoundFor2x2channels}, is used to generate Figure~\ref{fig:2x2complexShowingAddedValueOfCol1}.}


\subsection{Lower Bound via Maximum-Likelihood Decoding}
\label{sec:ml}
Beyond the upper bounds on performance derived thus far, it is natural to compare the worst-case performance attained by an IF receiver with that of an optimal maximum likelihood (ML) decoder for the same randomly linear pre-processed scheme but where each stream
is coded using an independent Gaussian codebooks. This provides a lower
bound on the worst-case outage probability of IF.

Consider a specific $N_r \times N_t$ matrix $\svv{H}_c$ and let $\svv{H}_S$ denote the submatrix of $\svv{H}_c\svv{P}_c$ formed by taking the columns with indices in $S\subseteq \{1, 2,..., \Nt \}$. For a joint ML decoder, the following is the maximal  rate achievable \cite{IntegerForcing} over the considered MIMO multiple-access channel:
\begin{align}
    R_{\rm JOINT}=\min_{S\subseteq\{1, 2,\ldots,\Nt\}}\frac{\Nt}{|S|}\log\det\left(\svv{I}_{N_r}+\svv{H}_S\svv{H}_S^H\right).
    \label{eq:R_ML_spaceOnly}
\end{align}
Note that since $\svv{H}_S$ depends on the random
linear pre-processing matrix $\svv{P}_c$, $ R_{\rm JOINT}$ is a random variable.
The lower bound is therefore obtained by taking the infimum
of \eqref{eq:R_ML_spaceOnly} over all $\svv{H}_c$ in $\mathbb{H}(C;N_t)$.


\begin{figure}
\begin{center}
\includegraphics[width=\columnwidth]{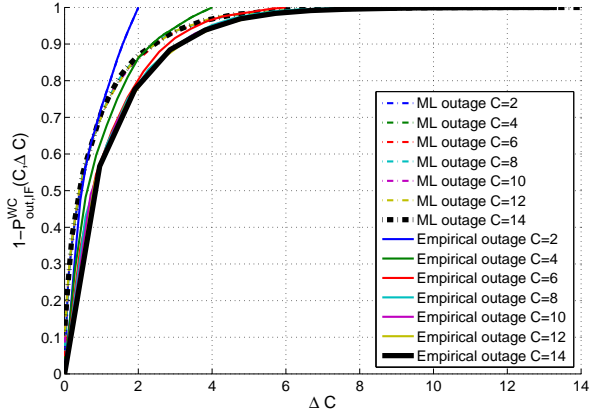}
\end{center}
\caption{Worst-case performance comparison between IF-SIC and joint ML decoding, for the case of $N_t=2$ transmit antennas.}
\label{fig:jointDecoding}
\end{figure}

Figure~\ref{fig:jointDecoding} provides a comparison between the worst-case empirical performance of IF-SIC and the (worst-case) empirical performance of the corresponding scheme with ML decoding, for the case of $N_t=2$ and an arbitrary number of receive antennas. In both cases, CUE pre-processing is applied.
Thus, performance depends only on the singular values of the channel and hence the outage probability curves are the supremum of the outage probability over a grid of (two) singular values.

As can be seen, the gap between IF and ML is quite small. This suggests that most of the loss with respect to the WI mutual information is due to the separate encoding of the data streams (i.e., MIMO MAC) rather than the suboptimailty of the IF receiver.

\section{Application : Universal Gap-To-Capacity for Multi-User Closed-Loop Multicast using P-IF}
\label{sec:BoundForClosedLoopMulticast}
Closed-loop MIMO multicast is a scenario where a transmitter equipped with $\Nt$ transmit antennas wishes to send the same message to $K$ users, where user $i$ is equipped with $N_{i}$ antennas.

Even though channel state information is available at both transmission ends, designing practical capacity-approaching schemes for closed-loop MIMO multicast with $K \geq 3$ users is challenging as detailed in \cite{khina:2015}. 
The outage bound derived above suggests that pre-processed IF may be an attractive practical closed-loop MIMO multicast scheme, allowing to obtain  a small gap-to-capacity with space-only pre-processing. Namely, we use the probabilistic method to establish the existence of  a pre-processing matrix guaranteeing
a desired gap-to-capacity.


We denote by $\svv{H}_{c,i}$ the $N_{i}\times \Nt$ channel matrix corresponding to the $i$th user and by $\mathcal{H}=\{\svv{H}_{c,i}\}_{i=1}^{K}$ the set of channels. The received signal at user $i$ is
\begin{eqnarray}
\boldsymbol{y}_{c,i}=\svv{H}_{c,i}\boldsymbol{x}_c+\boldsymbol{z}_{c,i}.
\label{eq:channel_modelClosed}
\end{eqnarray}
We assume that channel state information (CSI) is available at both transmission ends.

The multicast capacity is defined as the capacity of the compound channel~(\ref{eq:channel_modelClosed}). It is attained by a Gaussian input vector, where the mutual information is maximized over all covariance matrices  $\svv{Q}_c$ satisfying $ \Tr(\svv{Q}_c) \leq \Nt $:
\begin{eqnarray}
C(\mathcal{H})=\max_{\svv{Q}_c: \Tr(\svv{Q}_c) \leq \Nt}\min_{\svv{H}_c \in \mathcal{H}} \log\det(\svv{I}_{N_r \times N_r}+\svv{H}_c\svv{Q}_c\svv{H}^H_c).
\label{eq:multicastCapacity}
\end{eqnarray}
We assume without loss of generality that the input covariance matrix is the identity matrix. We may do so since the covariance shaping matrix $\svv{Q}_c^{1/2}$ may be absorbed into the channel by defining the effective channel $\hat{\svv{H}}_{c,i} = \svv{H}_{c,i}\svv{Q}_c^{1/2}$. Thus,
\begin{align}
C(\mathcal{H}) & =\min_i
\log\det(\svv{I}_{N_r \times N_r}+\hat{\svv{H}}_{c,i} \hat{\svv{H}}^H_{c,i}) \nonumber \\
& =\min_i
\log\det(\svv{I}_{N_t \times N_t}+\hat{\svv{H}}_{c,i}^H \hat{\svv{H}}_{c,i}).
\label{eq:closeLoopCapAsFuncOfWIofEachChannel}
\end{align}
In other words, after finding the optimal covariance matrix $\svv{Q}_c$, when it comes to the transmission scheme, it suffices to consider WI transmission over the effective channels $\hat{\svv{H}}_{c,i}$. With a slight abuse of notation, we use $\svv{H}_{c,i}$ to denote the effective channel, i.e., we drop the hat.
We note that for each user $i$, there exists an $\alpha_i\geq1$ such that
\begin{align}
{\svv{H}}_{c,i} = \alpha_i{\breve{\svv{H}}_{c,i}}.
\label{eq:usersWithEccSNR}
\end{align}
where
\begin{align}
\widebreve{\mathcal{H}}=\{\breve{\svv{H}}\}_{i=1}^{K}\in\mathbb{H}(C(\mathcal{H});\Nt),
\label{eq:breveInMathclH}
\end{align}
 i.e., $\{\breve{\svv{H}}\}_{i=1}^{K}$ is contained in the (continuum) set of channels, having the same capacity $C(\mathcal{H})$. Further, $\alpha_i$ can be interpreted as excess $\SNR$ that user $i$ enjoys, beyond the minimum it needs in the multicast setting.
 Since the achievable rate of IF is monotonically increasing in $\SNR$, 
 it follows that the achievable rates over the set of channels $\mathcal{H}$ can only be higher than over $\widebreve{\mathcal{H}}$, which we next lower bound.




Let us consider applying the random CUE pre-processed IF scheme to the compound channel set $\widebreve{\mathcal{H}}$.\footnote{We assume IF-SIC is used for $\Nt=2$ since it provides improved bounds.}
Define $A_i(R)$ as the event where the pre-processing matrix $\svv{P}_c$ is such that IF achieves a desired target $R$ for user $i$
\begin{align}
A_i(R)=\left\{{\svv{P}_c} : R_{\rm IF}(\svv{H}_{c,i}\cdot{\svv{P}}_c)\geq R \right\}.
\end{align}
We are interested in the probability of achieving the target rate for all users, i.e., $\prob\left(\cap{A_i(R)}\right)$. Note that
\begin{align}
\prob\left(\cap{A_i(R)}\right)=1-\prob\left(\overline{\cap{A_i(R)}}\right)=1-\prob\left(\cup{\overline{A_i(R)}}\right).
\end{align}
Applying the union bound, we get
\begin{align}
\prob\left(\cup{\overline{A_i(R)}}\right)\leq\sum\prob\left(\overline{A_i(R)}\right)
\end{align}
and hence
\begin{align}
\prob\left(\cap{A_i(R)}\right)\geq 1 - \sum\prob\left(\overline{A_i(R)}\right).
\end{align}
Define
\begin{align}
\widebreve{A_i}(R)=\left\{{\svv{P}_c} : R_{\rm IF}(\breve{\svv{H}}_{c,i}\cdot{\svv{P}}_c)\geq R \right\}.
\end{align}
Since $\prob\left(\widebreve{A_i}(R)\right)$ is the probability of achieving the target rate, whereas $P^{\rm WC}_{\rm out,IF}$ bounds the probability of the complement event, we have
\begin{align}
\prob\left(A_i(R)\right)\geq\prob\left(\widebreve{A_i}(R)\right)\geq1-P^{\rm WC}_{\rm out,IF}\left(C(\mathcal{H}),C-R\right),
\end{align}
or equivalently,
\begin{align}
\prob\left(\overline{A_i(R)}\right)=1-\prob\left(A_i(R)\right)\leq P^{\rm WC}_{\rm out,IF}\left(C(\mathcal{H}),C-R\right).
\end{align}
It follows that,
\begin{align}
\prob\left(\cap{A_i(R)}\right)\geq 1 - KP^{\rm WC}_{\rm out,IF}\left(C(\mathcal{H}),C-R\right).
\label{eq:capAi}
\end{align}

\begin{figure}
\includegraphics[width=\columnwidth]{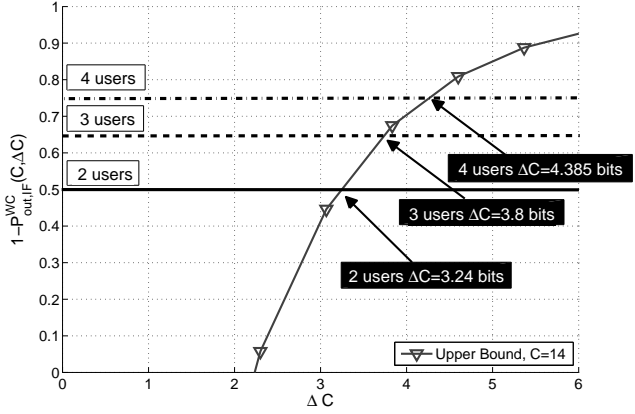}
\caption{Guaranteed achievable rates (using upper bound on outage probability stated in Corollary~\ref{col:col1} of Lemma~\ref{lem:lem3}) with pre-processed IF for closed-loop MIMO multicast transmission with two transmit antennas and with 2,3,4 users.}
\label{fig:empricalVsBoundFor2x2complex_closedLoop}
\end{figure}

This provides a means to obtain a guaranteed achievable transmission rate $R_{\rm WC-CL}(\mathcal{H})$ for closed-loop linear pre-processed IF. Namely, $R_{\rm WC-CL}(\mathcal{H})$ is the maximum rate for which
\begin{align}
P^{\rm WC}_{\rm out,IF}\left(C(\mathcal{H}),C-R\right)\leq\frac{1}{K}.
\label{eq:R_WC_CL}
\end{align}
Substituting~(\ref{eq:R_WC_CL}) in (\ref{eq:capAi}) we get that for any $R<R_{\rm WC-CL}(\mathcal{H})$
\begin{align}
\prob\left(\cap{A_i(R)}\right)> 1 - K\cdot \frac{1}{K}=0.
\end{align}
Thus, there must exist a linear pre-processing matrix $\svv{P}_c$ for which a target rate  $R<R_{\rm WC-CL}(\mathcal{H})$  is achievable (via P-IF transmission) for the compound channel~(\ref{eq:channel_modelClosed}).


Figure~\ref{fig:empricalVsBoundFor2x2complex_closedLoop} depicts the corresponding upper bounds on the gap-to-capacity for MIMO multicast with two transmit antennas, $K=2,3,4$ users, and where $C(\mathcal{H})=14$ bits. For calculating the upper bound on the guaranteed gap-to-capacity, we use the tightest bound on the outage probability we have developed for $\Nt=2$ which is Corollary~\ref{col:col1} of Lemma~\ref{lem:lem3}. We observe that
\begin{itemize}
\item For 2 users, a rate of 10.76 bits is guaranteed (gap of 3.24 bits to capacity).
\item For 3 users, a rate of 10.2 bits is guaranteed (gap of 3.8 bits to capacity).
\item For 4 users, a rate of 9.615 bits is guaranteed (gap of 4.385 to  capacity).
\end{itemize}

\section{Conclusion}
We obtained explicit universal bounds for the outage probability of a transmission scheme employing random unitary pre-processing  at the transmitter side and integer-forcing equalization at the receiver side. These bounds provide meaningful performance guarantees for
transmission over MIMO channels that depend only on
the channel's mutual information and number of transmit antennas. Nonetheless, simulations suggest that there is still a considerable gap between the obtained bounds and the true (worst-case) outage probability of the examined scheme, calling for further work.
\section{Acknowledgement}
The authors are deeply grateful to Or Ordentlich whose work laid the foundation for this paper and for many fruitful discussions.
\begin{appendices}
\section{Proof of Lemma~\ref{lem:lem1}}
\label{sec:proofOfLemma1}
We start by expressing $\left\|\svv{D}^{1/2}\svv{V}{\bf a}\right\|$ equivalently in complex notation. We note that ${\bf a}$ (which is a vector of $2\Nt$ real integers) can be viewed as the real representation of a complex vector ${\bf a}_c$ such that
 \begin{align}
{\bf a}=\begin{bmatrix}{\rm Re}({\bf a}_c) \\ {\rm Im}({\bf a}_c)\end{bmatrix}.
\end{align}
Obviously, $\|{\bf a}\|=\|{\bf a}_c\|$.

With this notation, and since $\svv{D}$ and $\svv{V}$ are the real representation of the complex matrices $\svv{D}_c$ and $\svv{V}_c$, it follows that
\begin{align}
\left\|\svv{D}^{1/2}\svv{V}{\bf a}\right\|=\left\|\svv{D}_c^{1/2}\svv{V}_c{\bf a}_c\right\|.
\end{align}
Now since $\svv{V}_c$ is drawn from the CUE, the distribution of $\left\|\svv{D}_c^{1/2}\svv{V}_c{\bf a}_c\right\|$ is equal to that of \mbox{$\left\|\svv{D}_c^{1/2}\svv{V}_c\begin{bmatrix} 1 && 0 && ... & 0\end{bmatrix}^T \|\mathbf{a}_c\|\right\|$}. Note also that
\begin{align}
&\left\|\svv{D}_c^{1/2}\svv{V}_c\begin{bmatrix} 1 && 0 && ... & 0\end{bmatrix}^T \|\mathbf{a}_c\|\right\|= \nonumber \\
&\left\|\svv{D}_c^{1/2}\svv{V}_c\begin{bmatrix} 1 && 0 && ... & 0\end{bmatrix}^T \right\|\|\mathbf{a}\|= \nonumber \\
&\left\|\svv{D}_c^{1/2}{\bf v}_{c,1}\right\|\|\mathbf{a}\|.
\label{eq:101}
\end{align}
where ${\bf v}_{c,1}$ is the first column of $\svv{V}_c$.

As described in \cite{NarulaTrottWornell:1999}, ${\bf v}_{c,1}$ is  uniformly distributed over the surface of the
complex unit sphere. Such a vector can be generated by taking a vector with zero-mean i.i.d. complex Gaussian components and scaling it
by its norm. The components of such a vector can be expressed as
\begin{align}
{\bf v}_{c,1}=\frac{G_{c,i}}{\sqrt{\sum_{i=1}^{\Nt}|G_{c,i}|^2}},
\end{align}
where $G_{c,i}$ are zero-mean i.i.d. complex circularly symmetric Gaussian random variables.

Similarly, a vector taken from a CRE matrix is uniformly distributed over the surface of the real unit sphere and it can be generated by taking a vector with zero-mean i.i.d. real Gaussian components and scaling it
by its norm. The components of such a vector can be expressed as
\begin{align}
{\bf o}_{r,1}=\frac{G_{r,i}}{\sqrt{\sum_{i=1}^{\Nt}G_{r,i}^2}},
\label{eq:narula}
\end{align}
where $G_{r,i}$ are zero-mean i.i.d. real Gaussian random variables.

We may rewrite  \eqref{eq:101} over the reals as
\begin{align}
\left\|\svv{D}^{1/2}\begin{bmatrix}{\rm Re}({\bf v}_{c,1}) \\ {\rm Im}({\bf v}_{c,1})\end{bmatrix}\right\|\|\mathbf{a}\|.
\label{eq:103}
\end{align}
Now, since the real and imaginary part of the complex Gaussian components are i.i.d. real Gaussian random variables, it follows that the resulting
$2\Nt\times1$ vector
\begin{align}
{\bf o}=\begin{bmatrix}{\rm Re}({\bf v}_{c,1}) \\ {\rm Im}({\bf v}_{c,1})\end{bmatrix}
\end{align}
is of the form of (\ref{eq:narula}). Hence, it is uniformly distributed over the surface of the ($2\Nt$-dimensional) real  unit  sphere and thus it can be interpreted as the first vector from a real matrix $\svv{O}$ drawn from CRE ensemble. Therefore
\begin{align}
\left\|\svv{D}^{1/2}{\bf o}\right\|\|\mathbf{a}\|
\label{eq:Do}
\end{align}
which equals (\ref{eq:101}) has  the same distribution as
\begin{align}
    \left\|\svv{D}^{1/2}\svv{O}\mathbf{a}\right\|.
\end{align}
It follows that $\left\|\svv{D}^{1/2}\svv{V}{\bf a}\right\|$ and $\left\|\svv{D}^{1/2}\svv{O}{\bf a}\right\|$ have the same distribution.
\section{Proof of Theorem~\ref{thm:thm1}}
\label{sec:proofOfTheorem1}
From Lemma~\ref{lem:lem2}, we have
\begin{align}
 & \prob(R_{\rm IF}\left(\svv{D},\svv{V}\right)<C-\Delta C) \nonumber \\
 & \leq \sum_{{\bf a}\in\mathbb{A}(\genGam,d_{\min};2N_t)}\frac{2\Nt\left(2^{\frac{C-\Delta C}{\Nt}}\alNt\right)^{\Nt-1/2}}{\|{\bf a}\|^{2\Nt-1}2^C\frac{2}{\sqrt{d_{\min}}}},
\end{align}
where $\mathbb{A}(\genGam,d_{\min};2N_t)$ and $\genGam$ are defined in (\ref{eq:A_beta_d}) and (\ref{eq:beta}). Reverting back to (\ref{eq:explicitSummation})
and noting that
$$\mathbb{A}(\genGam,d_{\min};2N_t) \subseteq \left\{\svv{a} \in \mathbb{Z}^{2N_t}: \|\svv{a}\| \leq \left\lfloor\sqrt{\frac{\genGam}{d_{\min}}}\right\rfloor+1 \right\},$$
this summation can be written as
\begin{align}
 & \sum_{{\bf a}\in\mathbb{A}(\genGam,d_{\min};2N_t)}\frac{2\Nt\sqrt{\genGam}^{2\Nt-1}}{\|{\bf a}\|^{2\Nt-1}2^C\frac{2}{\sqrt{d_{\min}}}} \nonumber \\
 & \leq\sum^{\left\lfloor\sqrt{\frac{\genGam}{d_{\min}}}\right\rfloor}_{k=0}\sum_{k<\|{\bf a}\|\leq k+1}\frac{2\Nt\sqrt{\genGam}^{2\Nt-1}}{k^{2\Nt-1}2^C\frac{2}{\sqrt{d_{\min}}}}.
 \label{eq:96}
\end{align}

Denoting $\eta=\eta(N_t,\genGam,d_{\min},C)=\frac{2\Nt\sqrt{\genGam}^{2\Nt-1}}{2^C\frac{2}{\sqrt{d_{\min}}}}$, we apply Lemma 1 in \cite{OrdentlichErez2012} (a bound for the number of integer vectors contained in a ball of a given radius). Using this bound while noting that when $\|{\bf a}\|=1$ there are exactly $2\Nt$ integer vectors, the right hand side of (\ref{eq:96}) may be further bounded as
\begin{align}
&\leq \eta{\rm Vol}(\mathcal{B}_{2\Nt}(1)) \times  \left[2\Nt+\sum^{\left\lfloor\sqrt{\frac{\genGam}{d_{\min}}}\right\rfloor}_{k=1} \right.\nonumber\\
& \left. \left(\frac{\left(k+1+\frac{\sqrt{2\Nt}}{2}\right)^{2\Nt}-\left(\max\left(k-\frac{\sqrt{2\Nt}}{2} ,0\right)\right)^{2\Nt}}{k^{2\Nt-1}}\right)\right],
\label{eq:114}
\end{align}
where we note that  (\ref{eq:114}) trivially holds when
$\left\lfloor\sqrt{\frac{\genGam}{d_{\min}}}\right\rfloor=0$
since $\mathbb{A}(\genGam,d_{\min};2N_t)$ is the empty set in this case. Henceforth we assume that $\left\lfloor\sqrt{\frac{\genGam}{d_{\min}}}\right\rfloor\geq1$.
Further, the right hand side of (\ref{eq:114}) can be rewritten as
{
\begin{align}
& {\eta{\rm Vol}(\mathcal{B}_{2\Nt}(1))} \left[\underbrace{2\Nt}_{\rm I}+ \underbrace{\sum^{\left\lfloor\frac{\sqrt{2\Nt}}{2}\right\rfloor}_{k=1}\frac{\left(k+1+\frac{\sqrt{2\Nt}}{2}\right)^{2\Nt}}{k^{2\Nt-1}}}_{\rm II} + \right. \nonumber \\
& \left. \underbrace{\sum^{\left\lfloor\sqrt{\frac{\genGam}{d_{\min}}}\right\rfloor}_{k=\left\lfloor\frac{\sqrt{2\Nt}}{2}\right\rfloor+1}\frac{\left[\left(k+1+\frac{\sqrt{2\Nt}}{2}\right)^{2\Nt}-\left(k-\frac{\sqrt{2\Nt}}{2} \right)^{2\Nt}\right]}{k^{2\Nt-1}}}_{\rm III}
\right].
\label{eq:122}
\end{align}
}
We search for $c_1$ and $c_2$ (independent of $k$) such that
\begin{align}
\left(k+1+\frac{\sqrt{2\Nt}}{2}\right)^{2\Nt}\leq c_1 k^{2\Nt-1}
\label{eq:c1}
\end{align}
for $1\leq k \leq \left\lfloor\frac{\sqrt{2\Nt}}{2}\right\rfloor$, and
\begin{align}
\left[\left(k+1+\frac{\sqrt{2\Nt}}{2}\right)^{2\Nt}-\left(k-\frac{\sqrt{2\Nt}}{2} \right)^{2\Nt}\right]\leq c_2 k^{2\Nt-1}
\label{eq:c2}
\end{align}
for $k\geq1$, since it will then follow that
\begin{align}
II+III &\leq  \sum^{\left\lfloor\sqrt{\frac{\genGam}{d_{\min}}}\right\rfloor}_{k=1}  \max(c_1,c_2)
\label{eq:100}
\end{align}
We note that since (again assuming $\left\lfloor\sqrt{\frac{\genGam}{d_{\min}}}\right\rfloor\geq1$)
\begin{align}
2N_t \leq \sum^{\left\lfloor\sqrt{\frac{\genGam}{d_{\min}}}\right\rfloor}_{k=1} 2N_t,
\end{align}
it will thus further follow that
\begin{align}
I+II+III &\leq  \sum^{\left\lfloor\sqrt{\frac{\genGam}{d_{\min}}}\right\rfloor}_{k=1}  \left[2\Nt+\max(c_1,c_2)\right] \nonumber \\
&={\left\lfloor\sqrt{\frac{\genGam}{d_{\min}}}\right\rfloor} \left[2\Nt+\max(c_1,c_2)\right].
\label{eq:102}
\end{align}

To establish \eqref{eq:c1} and \eqref{eq:c2}, we first show that we may take
\begin{align}
c_1=\left(1+\sqrt{2\Nt}\right)^{2\Nt}.
\label{eq:126}
\end{align}
In other words, we need to establish that
\begin{align}
\frac{\left(k+1+\frac{\sqrt{2\Nt}}{2}\right)^{2\Nt}}{k^{2\Nt-1}}\leq \left(1+\sqrt{2\Nt}\right)^{2\Nt}
\end{align}
holds for $1\leq k \leq \left\lfloor\frac{\sqrt{2\Nt}}{2}\right\rfloor$.
Since $k\geq1$, we have
\begin{align}
 \frac{\left(k+1+\frac{\sqrt{2\Nt}}{2}\right)^{2\Nt}}{k^{2\Nt-1}} \leq \left(k+1+\frac{\sqrt{2\Nt}}{2}\right)^{2\Nt}.
\end{align}
Now, for $k\leq\frac{\sqrt{2\Nt}}{2}$, we have
\begin{align}
 \left(k+1+\frac{\sqrt{2\Nt}}{2}\right)^{2\Nt} \leq\left(1+\sqrt{2\Nt}\right)^{2\Nt}.
\end{align}
Hence, (\ref{eq:c1}) indeed holds with $c_1=\left(1+\sqrt{2\Nt}\right)^{2\Nt}$.

Next, we show that we may take
\begin{align}
c_2 =\left[\left(2+\frac{\sqrt{2\Nt}}{2}\right)^{2\Nt}-\left(1-\frac{\sqrt{2\Nt}}{2} \right)^{2\Nt}\right].
\label{eq:129}
\end{align}
Thus, we need to show that for $k\geq 1$, the following holds
\begin{align}
& \frac{1}{k^{2\Nt-1}}\left[\left(k+1+\frac{\sqrt{2\Nt}}{2}\right)^{2\Nt}-\left(k-\frac{\sqrt{2\Nt}}{2} \right)^{2\Nt}\right] \nonumber \\
& \leq \left[\left(2+\frac{\sqrt{2\Nt}}{2}\right)^{2\Nt}-\left(1-\frac{\sqrt{2\Nt}}{2} \right)^{2\Nt}\right].
\end{align}
Using the binomial expansion
\begin{align}
&\frac{1}{k^{2\Nt-1}}\left[\left(k+1+\frac{\sqrt{2\Nt}}{2}\right)^{2\Nt}-\left(k-\frac{\sqrt{2\Nt}}{2} \right)^{2\Nt}\right] \nonumber \\
&=\sum_{i=0}^{2\Nt}{2\Nt-i \choose i}k^{1-i}\left[\left(1+\frac{\sqrt{2\Nt}}{2}\right)^i-\left(-\frac{\sqrt{2\Nt}}{2} \right)^i\right]  \nonumber \\
&=\sum_{i=1}^{2\Nt}{2\Nt-i \choose i}k^{1-i}\left[\left(1+\frac{\sqrt{2\Nt}}{2}\right)^i-\left(-\frac{\sqrt{2\Nt}}{2} \right)^i\right]
\label{eq:132}\\
& \leq \sum_{i=1}^{2\Nt}{2\Nt-i \choose i}\left[\left(1+\frac{\sqrt{2\Nt}}{2}\right)^i-\left(-\frac{\sqrt{2\Nt}}{2} \right)^i\right]
\label{eq:133} \\
& = \left[\left(2+\frac{\sqrt{2\Nt}}{2}\right)^{2\Nt}-\left(1-\frac{\sqrt{2\Nt}}{2} \right)^{2\Nt}\right],
\end{align}
where (\ref{eq:133}) follows since each of the summands in (\ref{eq:132}) is monotonically decreasing in $k$ (when $k\geq1$ and $i \geq 1$). Thus, (\ref{eq:c2}) indeed holds when taking $c_2$ as defined in (\ref{eq:129}). Hence, we have established our choices for $c_1$ and $c_2$.

Now, since for $\Nt\geq2$ we have
\begin{align}
c_2&=\left[\left(2+\frac{\sqrt{2\Nt}}{2}\right)^{2\Nt}-\left(1-\frac{\sqrt{2\Nt}}{2} \right)^{2\Nt}\right] \nonumber \\
&\leq \left(2+\frac{\sqrt{2\Nt}}{2}\right)^{2\Nt}  \nonumber \\
&\leq \left(1+\sqrt{2\Nt}\right)^{2\Nt} \nonumber \\
&=c_1.
\end{align}
Recalling (\ref{eq:102}), it follows that
\begin{align}
I+II+III \leq {\left\lfloor\sqrt{\frac{\genGam}{d_{\min}}}\right\rfloor}\left(2\Nt+\left(1+\sqrt{2\Nt}\right)^{2\Nt}\right).
\label{eq:199}
\end{align}
 Applying (\ref{eq:199}) to (\ref{eq:122}), we get
{\small
\begin{align}
& \sum_{{\bf a}\in\mathbb{A}(\genGam,d_{\min};2N_t)}\frac{2\Nt\sqrt{\genGam}^{2\Nt-1}}{\|{\bf a}\|^{2\Nt-1}2^C\frac{2}{\sqrt{d_{\min}}}} \nonumber \\
& \leq {\left\lfloor\sqrt{\frac{\genGam}{d_{\min}}}\right\rfloor} \eta{\rm Vol}(\mathcal{B}_{2\Nt}(1))\cdot \left(2\Nt+\left(1+\sqrt{2\Nt}\right)^{2\Nt}\right)  \nonumber \\
& = {\left\lfloor\sqrt{\frac{\genGam}{d_{\min}}}\right\rfloor} \frac{2\left(2\Nt+\left(1+\sqrt{2\Nt}\right)^{2\Nt}\right)\Nt\sqrt{\genGam}^{2\Nt-1}}{2^C\frac{2}{\sqrt{d_{\min}}}}{\rm Vol}(\mathcal{B}_{2\Nt}(1))  \nonumber \\
& \leq \frac{2\left(2\Nt+\left(1+\sqrt{2\Nt}\right)^{2\Nt}\right)\Nt\sqrt{\genGam}^{2\Nt-1}}{2^C\frac{2}{\sqrt{d_{\min}}}}\sqrt{\frac{\genGam}{d_{\min}}}{\rm Vol}(\mathcal{B}_{2\Nt}(1))  \nonumber \\
& = \frac{\left(2\Nt+\left(1+\sqrt{2\Nt}\right)^{2\Nt}\right)\Nt{\genGam}^{\Nt}}{2^C}{\rm Vol}(\mathcal{B}_{2\Nt}(1)).
\end{align}
}
Further, setting $\genGam=2^{\frac{C-\Delta C}{\Nt}}\alNt$, we have
{\small
\begin{align}
 &\prob\left(R_{\rm IF}\left(\svv{D},\svv{V}\right)<C-\Delta C\right)
  \nonumber \\
 &\leq \frac{\left(2\Nt+\left(1+\sqrt{2\Nt}\right)^{2\Nt}\right)\Nt\left(2^{\frac{C-\Delta C}{\Nt}}\alNt\right)^{\Nt}}{2^C}{\rm Vol}(\mathcal{B}_{2\Nt}(1))  \nonumber \\
 &= \frac{\left(2\Nt+\left(1+\sqrt{2\Nt}\right)^{2\Nt}\right)\Nt2^{C-\Delta C}\alNt^{\Nt}}{2^C}{\rm Vol}(\mathcal{B}_{2\Nt}(1))  \nonumber \\
 &= \left(2\Nt+\left(1+\sqrt{2\Nt}\right)^{2\Nt}\right)\Nt \alNt^{\Nt}\frac{\pi^{\Nt}}{\Gamma(\Nt+1)}2^{-\Delta C}.
\end{align}
}
Hence,
\begin{align}
\prob\left(R_{\rm IF}\left(\svv{D},\svv{V}\right)<C-\Delta C\right) & \leq c(\Nt)2^{-\Delta C}
\end{align}
where
\begin{align}
c(\Nt)=\left(2\Nt+\left(1+\sqrt{2\Nt}\right)^{2\Nt}\right)\Nt \alNt^{\Nt}\frac{\pi^{\Nt}}{\Gamma(\Nt+1)}
\label{eq:appAlast}
\end{align}
is a constant that depends only on $\Nt$. We note that (\ref{eq:appAlast}) does not depend on $\svv{D}$ and hence  it holds also for the supremum over $\svv{D}\in \mathbb{D}(C;2N_t)$. Recalling~(\ref{eq:rewirte6withD}), we have
\begin{align}
P^{\rm WC}_{\rm out,IF}\left(C,\Delta C\right)&=\sup_{\svv{D}\in\mathbb{D}(C;2N_t)}\prob\left(R_{\rm IF}(\svv{D},\svv{V})<C-\Delta C\right)\nonumber\\
&\leq c(\Nt)2^{-\Delta C}
\end{align}
which concludes the proof.
\begin{remark}
It can be shown that the term $I$ (i.e., the term $2N_t$) in \eqref{eq:122} can be dropped from $c(N_t)$ and further that the choice $c_1=\left(2+\frac{\sqrt{2\Nt}}{2}\right)^{2\Nt}$ also satisfies (\ref{eq:c1}). Combining these two  observations, (\ref{eq:appAlast}) can be further tightened to
\begin{align}
c(\Nt)=\left(2+\frac{\sqrt{2\Nt}}{2}\right)^{2\Nt}\Nt \alNt^{\Nt}\frac{\pi^{\Nt}}{\Gamma(\Nt+1)}.
\end{align}
Figures~\ref{fig:lem1NoCountOpt}~and~\ref{fig:fig2} use this tighter bound.
\label{rem:private}
\end{remark}
\section{Tighter Bounds for $N_r\times2$ channels: Proof of Lemma~{\ref{lem:lem3} and Theorem~\ref{thm:thm2}}}
\label{sec:BoundFor2x2channels}

We consider the performance of an IF-SIC receiver over $N_r\times2$ channels. Thus, we now have $2N_t=4$.

As mentioned in Section~\ref{sec:precIFequalizationWithSic}, when using complex linear pre-processing matrices, the rates of both IF and IF-SIC come
in pairs. Denote
\begin{align}
        R^{1,\rm IF}(\svv{D},\svv{V})&=R_{1,\rm IF}(\svv{D},\svv{V})=R_{2,\rm IF}(\svv{D},\svv{V})
        \label{eq:119a}
\end{align}
and
\begin{align}
        R^{2,\rm IF}(\svv{D},\svv{V})&=R_{3,\rm IF}(\svv{D},\svv{V})=R_{4,\rm IF}(\svv{D},\svv{V}).
         \label{eq:119b}
\end{align}
where $R_{m,\rm IF}(\svv{D},\svv{V})$ is the rate of the $m$th equation (corresponding to the $m$th row of $\svv{A}$) as defined in \eqref{eq:comprate_m}, where we implicitly assume that $\svv{A}$ is the optimal matrix for IF. Similarly, denote
\begin{align}
        R^{1,\rm IF-SIC}(\svv{D},\svv{V})&=R_{1,\rm IF-SIC}(\svv{D},\svv{V})=R_{2,\rm IF-SIC}(\svv{D},\svv{V})
         \label{eq:119c}
\end{align}
and
\begin{align}
        R^{2,\rm IF-SIC}(\svv{D},\svv{V})&=R_{3,\rm IF-SIC}(\svv{D},\svv{V})=R_{4,\rm IF-SIC}(\svv{D},\svv{V}).
         \label{eq:119d}
\end{align}
We note that the (optimal) integer matrix $\svv{A}$ used for IF 
in \eqref{eq:119a}-\eqref{eq:119b} is  in general different than the (optimal) matrix $\svv{A}$ used for IF-SIC in \eqref{eq:119c}-\eqref{eq:119d}. 
Nonetheless, when applying IF-SIC, one decodes first the equation with the \emph{highest} SNR. Since for this equation SIC has no effect it follows that the first row of $\svv{A}$ is the same in both cases and hence
\begin{align}
R^{1,\rm IF-SIC}(\svv{D},\svv{V}) & = R^{1,\rm IF}(\svv{D},\svv{V}).
\label{eq:RifSIC1}
\end{align}
From Section III.A in \cite{OrdentlichErezNazer1:2013}, we have
\begin{align}
    \sum_{i=1}^{4}R_{i,\rm IF-SIC}(\svv{D},\svv{V})=C-\log|\det(\svv{A})|.
\end{align}
Furthermore, by Theorem 3 in \cite{OrdentlichErezNazer1:2013}, the optimal integer matrix $\svv{A}$ for IF-SIC is unimodular (i.e., has determinant 1 or -1). Hence,
\begin{align}
\sum_{i=1}^{4}R_{i,\rm IF-SIC}(\svv{D},\svv{V})=C.
\label{eq:C_sic}
\end{align}
Since we use IF-SIC with equal rate per stream, we have the following
\begin{align}
R_{\rm IF-SIC}(\svv{D},\svv{V})=4\min(R^{1,\rm IF-SIC}(\svv{D},\svv{V}),R^{2,\rm IF-SIC}(\svv{D},\svv{V})).
\label{eq:RifSIC}
\end{align}

Substituting (\ref{eq:RifSIC1}) into (\ref{eq:C_sic}), we have
\begin{align}
R^{2,\rm IF-SIC}(\svv{D},\svv{V})=\frac{C}{2}-R^{1,\rm IF}(\svv{D},\svv{V}).
\end{align}
Now, from Theorem 3 in \cite{ApproximateSumCapacity_OrdentlichErezNazer:2016} (with equivalent four real dimensions), we have
\begin{align}
2(R^{1,\rm IF}(\svv{D},\svv{V}) + R^{2,\rm IF}(\svv{D},\svv{V})) \geq C-4 \nonumber \\
R^{1,\rm IF}(\svv{D},\svv{V}) + R^{2,\rm IF}(\svv{D},\svv{V}) \geq \frac{C-4}{2}.
\end{align}
Since $R^{1,\rm IF}(\svv{D},\svv{V}) \geq R^{2,\rm IF}(\svv{D},\svv{V})$, it follows that
\begin{align}
R^{1,\rm IF}(\svv{D},\svv{V}) \geq \frac{C-4}{4}.
\label{eq:RifSICLowBound}
\end{align}
We conclude that
\begin{align}
R_{\rm IF-SIC}(\svv{D},\svv{V}) &= 4\min\left(R^{1,\rm IF-SIC}(\svv{D},\svv{V}),R^{2,\rm IF-SIC}(\svv{D},\svv{V})\right) \nonumber \\
& = 4\min\left(R^{1,\rm IF}(\svv{D},\svv{V}),\frac{C}{2}-R^{1,\rm IF}(\svv{D},\svv{V})\right) \nonumber \\
& \geq  4\min\left(\frac{C-4}{4},\frac{C}{2}-R^{1,\rm IF}(\svv{D},\svv{V})\right) \nonumber \\
& = \min\left({C-1},2C-4R^{1,\rm IF}(\svv{D},\svv{V})\right).
\end{align}

Henceforth, we analyze the outage
probability for $C>1$ and target rates that are no greater than $C-1$, so that the inequality $2C-4R^{1,\rm IF}(\svv{D},\svv{V})<C-1$ is satisfied. Thus, we consider gap-to-capacity values such that $\Delta C>1$. Our goal is to bound
\begin{align}
&\prob\left(R_{\rm IF-SIC}(\svv{D},\svv{V})<C-\Delta C\right)  \nonumber \\ &=\prob\left(2C-4\frac{1}{2}\log\left(\frac{1}{\lambda_1^2(\Lambda)}\right)<C-\Delta C\right)  \nonumber \\
&=\prob\left(-2\log\left(\frac{1}{\lambda_1^2(\Lambda)}\right)<-(C+\Delta C)\right)  \nonumber \\
&=\prob\left(\lambda_1^2(\Lambda)<2^{\nicefrac{-1}{2}(C+\Delta C)}\right).
\label{eq:PrLambda1Implicit-Comp}
\end{align}
We are now ready to prove Lemma~\ref{lem:lem3} and Theorem~\ref{thm:thm2}.
Let $\genGam=2^{\nicefrac{-1}{2}(C+\Delta C)}$. We wish to bound~(\ref{eq:PrLambda1Implicit-Comp}), or equivalently
\begin{align}
\prob\left(\lambda_1^2(\Lambda)<{\genGam}\right)=\prob\left(\lambda_1(\Lambda)<\sqrt{\genGam}\right).
\end{align}
for a given matrix $\svv{D}$. Note that the event $\lambda_1(\Lambda)<\sqrt{\genGam}$ is equivalent to the event
\begin{align}
\bigcup_{\mathbf{a}\in{\mathbb{Z}^{4}\setminus\{{\bf 0}\}}}||\svv{D}^{-1/2}\svv{V}^T\mathbf{a}||<\sqrt{\genGam}.
\end{align}
Applying the union bound yields
\begin{align}
\prob\left(\lambda_1(\Lambda)<\sqrt{\genGam}\right) \leq \sum_{{\bf a}\in\mathbb{Z}^{4}\setminus\{{\bf 0}\}}\prob\left(||\svv{D}^{-1/2}\svv{V}^T{\bf a}||<\sqrt{\genGam}\right).
\label{eq:outageProbReal2x2}
\end{align}
Note that if $\frac{||{\bf a}||}{\sqrt{d_{\max}}}>\sqrt{\genGam}$, we have
\begin{align}
    \prob\left(||\svv{D}^{-1/2}\svv{V}^T{\bf a}||<\sqrt{\genGam}\right)=0.
    \label{eq:173}
\end{align}
Therefore, using the notation of  (\ref{eq:A_beta_d}), the set of relevant vectors ${\bf a}$ is
\begin{align}
{\mathbb{A}(\genGam,1/d_{\max};4)}=\left\{{\bf a}\in \mathbb{Z}^{4}:0<||{\bf a}||<\sqrt{\genGam d_{\max}}\right\}.
\end{align}
It follows from (\ref{eq:outageProbReal2x2}) and (\ref{eq:173}) that
\begin{align}
&\prob\left(\lambda_{1}(\Lambda)<\sqrt{\genGam}\right)
\nonumber \\
&\leq \sum_{{\bf a}\in{\mathbb{A}(\genGam,1/d_{\max};4)}}\prob\left(\|\svv{D}^{-1/2}\svv{V}^T{\bf a}\|<\sqrt{\genGam}\right).
\end{align}

We now apply a similar derivation to that of   Section~\ref{sec:BoundforNrxNtchannels}. Applying Lemma~\ref{lem:lem1}, we have
\begin{align}
\prob\left(\|\svv{D}^{-1/2}\svv{V}^T{\bf a}\|<\sqrt{\genGam}\right)=\prob\left(\|\svv{D}^{-1/2}\svv{O}^T{\bf a}\|<\sqrt{\genGam}\right).
\end{align}
where $\svv{O}$ is drawn from the CRE. Hence, we can apply the same geometric interpretation as in Section~\ref{sec:BoundforNrxNtchannels} and interpret
$\prob\left(\|\svv{D}^{-1/2}\svv{O}^T{\bf a}\|<\sqrt{\genGam}\right)$ as the ratio of the surface area of the four-dimensional ellipsoid  inside a ball with radius $\sqrt{\genGam}$ and the surface area of this ellipsoid.
The axes of this ellipsoid are defined as
\begin{align}
x_i=\frac{\|{\bf a}\|}{\sqrt{d_i}}.
\end{align}
For the case of four real dimensions, (\ref{eq:ellipsCirecEq}) can be written as
\begin{align}
\sum_{{\bf a}\in{\mathbb{A}(\genGam,1/d_{\max};4)}}\prob\left(\|\svv{D}^{-1/2}{\bf o}_{\|{\bf a}\|}<\sqrt{\genGam}\right)=\nonumber \\
\sum_{{\bf a}\in{\mathbb{A}(\genGam,1/d_{\max};2N_t)}}\frac{\rm CAP_{\rm ell}}{L(x_1,x_2,x_3,x_4)}
\label{eq:ratioForSIC}
\end{align}
where
\begin{align}
{\rm CAP_{\rm ell}}<A_4(\sqrt{\genGam})=4\frac{\pi^2}{2}{\sqrt{\genGam}}^3\triangleq\overline{\rm CAP_{\rm ell}},
\label{eq:CAPellSIC}
\end{align}
and
\begin{align}
{L(x_1,x_2,x_3,x_4)}&>\frac{\pi^2}{2}\frac{\aNorm^4}{\prod_{i=1}^{4}\sqrt{d_i}}\left(\frac{2\sqrt{d_{\min}}}{\aNorm}+\frac{2\sqrt{d_{\max}}}{\aNorm}\right) \nonumber \\
& \geq {\pi^2}\frac{\aNorm^3}{2^C}\left({\sqrt{d_{\max}}}\right)\triangleq\underline{L}(x_1,x_2,x_3,x_4).
\label{eq:ellCirSIC}
\end{align}
Substituting~(\ref{eq:CAPellSIC}) and (\ref{eq:ellCirSIC}) in (\ref{eq:ratioForSIC}), we obtain
\begin{align}
&\sum_{{\bf a}\in{\mathbb{A}(\genGam,1/d_{\max};4)}}\frac{\rm CAP_{\rm ell}}{L(x_1,x_2,x_3,x_4)}
\nonumber \\
&<\sum_{{\bf a}\in{\mathbb{A}(\genGam,1/d_{\max};4)}}\frac{\overline{\rm CAP_{\rm ell}}}{\underline{L}(x_1,x_2,x_3,x_4)} \nonumber \\
&=\sum_{{\bf a}\in{\mathbb{A}(\genGam,1/d_{\max};4)}}\frac{2\pi^2{\sqrt{\genGam}}^3}{{\pi^2}\frac{\aNorm^3}{2^C}\left({\sqrt{d_{\max}}}\right)}.
\end{align}
Recalling that $\genGam=2^{\nicefrac{-1}{2}(C+\Delta C)}$, we get that for $\Delta C<1$
\begin{align}
&\prob\left(R_{\rm IF-SIC}(\svv{D},\svv{V})<C-\Delta C\right) \nonumber \\
&\leq \sum_{{\bf a}\in\mathbb{A}(\genGam,1/d_{\max};4)}\frac{2\pi^2{2^{\nicefrac{-3}{4}(C+\Delta C)}}}{{\pi^2}\frac{\aNorm^3}{2^C}\left({\sqrt{d_{\max}}}\right)},
\end{align}
which proves Lemma~\ref{lem:lem3}.

To establish Theorem~\ref{thm:thm2}, we follow the footsteps of the proof of Theorem~\ref{thm:thm1} (noting that now $2N_t=4$) to obtain
\begin{align}
&\sum_{{\bf a}\in{\mathbb{A}(\genGam,1/d_{\max};4)}}\prob\left(\|\svv{D}^{-1/2}{\bf o}_{\|{\bf a}\|}<\sqrt{\genGam}\right) \nonumber \\
& \leq \sum_{{\bf a}\in{\mathbb{A}(\genGam,1/d_{\max};4)}}\frac{2\pi^2{\sqrt{\genGam}}^3}{{\pi^2}\frac{\aNorm^3}{2^C}{\sqrt{d_{\max}}}}  \nonumber \\
& = \sum_{{\bf a}\in{\mathbb{A}(\genGam,1/d_{\max};4)}}\frac{2{\sqrt{\genGam}}^3{2^C}}{\aNorm^3{\sqrt{d_{\max}}}}  \nonumber \\
& = \frac{2{\sqrt{\genGam}}^3{2^C}}{{\sqrt{d_{\max}}}}  \sum^{\left\lfloor\sqrt{\genGam d_{\max}}\right\rfloor}_{k=0}
\sum_{k<\|{\bf a}\|\leq k+1} \frac{1}{k^3} \nonumber \\
& \leq  \frac{2{\sqrt{\genGam}}^3{2^C}}{{\sqrt{d_{\max}}}} \left[2N_t+  \sum^{\left\lfloor\sqrt{\genGam d_{\max}}\right\rfloor}_{k=1} \frac{\left(k+2\right)^{4}-\left(k-1\right)^{4}}{k^3} \right] \nonumber \\
& \leq  \frac{2{\sqrt{\genGam}}^3{2^C}}{{\sqrt{d_{\max}}}} \left[4+  \sum^{\left\lfloor\sqrt{\genGam d_{\max}}\right\rfloor}_{k=1}\frac{81k^3}{k^3} \right]  \nonumber \\
& = \sum^{\left\lfloor\sqrt{\genGam d_{\max}}\right\rfloor}_{k=1}\frac{{\sqrt{\genGam}}^3{2^C}\pi^2 85}{{\sqrt{d_{\max}}}} \nonumber \\
& \leq \frac{{\sqrt{\genGam}}^3{2^C}\pi^2 85 \sqrt{\genGam d_{\max}}}{{\sqrt{d_{\max}}}}  \nonumber \\
& \leq {\genGam}^2 {2^C}\pi^2 85.
\label{eq:thm2NoD}
\end{align}
As (\ref{eq:thm2NoD}) does not depend on $\svv{D}$,  it follows that the bound holds also for the supremum over $\svv{D} \in \mathbb{D}(C;4)$. Now since $\genGam=2^{-\nicefrac{1}{2}(C+\Delta C)}$, we get
\begin{align}
P^{\rm WC}_{\rm out,IF-SIC}\left(C,\Delta C\right)&\leq85\cdot\pi^2\cdot2^C2^{-C-\Delta C} \nonumber \\
&=85\cdot\pi^2\cdot2^{-\Delta C}.
\label{eq:148}
\end{align}
\begin{remark}
\label{rem:rem7}
Similar to Remark~\ref{rem:private} in Appendix~\ref{sec:proofOfTheorem1}, (\ref{eq:148}) can be further tightened to
\begin{align}
P^{\rm WC}_{\rm out,IF-SIC}\left(C,\Delta C\right) \leq 81\cdot\pi^2\cdot2^{-\Delta C}.
\end{align}
\end{remark}
\end{appendices}

\bibliographystyle{IEEEtran}
\bibliography{eladd}

\begin{thebibliography}{10}
\providecommand{\url}[1]{#1}
\csname url@samestyle\endcsname
\providecommand{\newblock}{\relax}
\providecommand{\bibinfo}[2]{#2}
\providecommand{\BIBentrySTDinterwordspacing}{\spaceskip=0pt\relax}
\providecommand{\BIBentryALTinterwordstretchfactor}{4}
\providecommand{\BIBentryALTinterwordspacing}{\spaceskip=\fontdimen2\font plus
\BIBentryALTinterwordstretchfactor\fontdimen3\font minus
  \fontdimen4\font\relax}
\providecommand{\BIBforeignlanguage}[2]{{%
\expandafter\ifx\csname l@#1\endcsname\relax
\typeout{** WARNING: IEEEtran.bst: No hyphenation pattern has been}%
\typeout{** loaded for the language `#1'. Using the pattern for}%
\typeout{** the default language instead.}%
\else
\language=\csname l@#1\endcsname
\fi
#2}}
\providecommand{\BIBdecl}{\relax}
\BIBdecl

\bibitem{telatar}
I.~E. Telatar, ``Capacity of multi-antenna {G}aussian channels,'' \emph{Eur.
  Trans. Telecommun}, vol.~10, pp. 585--598, Nov. 1999.

\bibitem{tse2005fundamentals}
D.~Tse and P.~Viswanath, \emph{Fundamentals of {W}ireless
  {C}ommunication}.\hskip 1em plus 0.5em minus 0.4em\relax Cambridge university
  press, 2005.

\bibitem{AchievingNearCapacityOnAmultipleAntennaChannel:HochwaldtenBrink2003}
B.~Hochwald and S.~ten Brink, ``Achieving near-capacity on a multiple-antenna
  channel,'' \emph{Communications, IEEE Transactions on}, vol.~51, no.~3, pp.
  389--399, Mar. 2003.

\bibitem{OrdentlichErez:IFUniversallyAchievesCapacityUpToGap:2013}
O.~Ordentlich and U.~Erez, ``Precoded integer-forcing universally achieves the
  {MIMO} capacity to within a constant gap,'' \emph{IEEE Transactions on
  Information Theory}, vol.~61, no.~1, pp. 323--340, Jan. 2015.

\bibitem{ExplicitSpaceTimeCodesAchievingtheDiversity:Elia2006}
P.~Elia, K.~Kumar, S.~Pawar, P.~Kumar, and H.~feng Lu, ``Space-time codes
  meeting the diversity-multiplexing gain tradeoff with low signalling
  complexity,'' \emph{Information Theory, IEEE Transactions on}, vol.~52,
  no.~9, pp. 3869--3884, Sept. 2006.

\bibitem{CoRa:Larsson2004}
E.~G. Larsson, ``Constellation randomization ({C}o{R}a) for outage performance
  improvement on {MIMO} channels,'' in \emph{Global Telecommunications
  Conference, 2004. GLOBECOM'04. IEEE}, vol.~1, 2004, pp. 386--390.

\bibitem{TransmitDiversityOverQuasiStaticFadingChannelsUsingMultipleAntennasAndRandomSignalMapping:Yingxue2003}
Y.~Li, C.~Georghiades, and G.~Huang, ``Transmit diversity over quasi-static
  fading channels using multiple antennas and random signal mapping,''
  \emph{Communications, IEEE Transactions on}, vol.~51, no.~11, pp. 1918--1926,
  Nov. 2003.

\bibitem{IntegerForcing}
J.~Zhan, B.~Nazer, U.~Erez, and M.~Gastpar, ``Integer-forcing linear
  receivers,'' \emph{IEEE Transactions on Information Theory}, vol.~60, no.~12,
  pp. 7661--7685, Dec. 2014.

\bibitem{nazer2011compute}
B.~Nazer and M.~Gastpar, ``Compute-and-forward: Harnessing interference through
  structured codes,'' \emph{IEEE Transactions on Information Theory}, vol.~57,
  no.~10, pp. 6463--6486, Oct. 2011.

\bibitem{PracticalCodeDesign_OrdentlichErez:2011}
O.~Ordentlich, J.~Zhan, U.~Erez, M.~Gastpar, and B.~Nazer, ``Practical code
  design for compute-and-forward,'' in \emph{2011 IEEE International Symposium
  on Information Theory Proceedings}, July 2011, pp. 1876--1880.

\bibitem{ApproximateSumCapacity_OrdentlichErezNazer:2016}
O.~Ordentlich, U.~Erez, and B.~Nazer, ``The approximate sum capacity of the
  symmetric {G}aussian {$K$}-user interference channel,'' \emph{IEEE
  Transactions on Information Theory}, vol.~60, no.~6, pp. 3450--3482, June
  2014.

\bibitem{OrdentlichErezNazer1:2013}
------, ``Successive integer-forcing and its sum-rate optimality,'' in
  \emph{2013 51st Annual Allerton Conference on Communication, Control, and
  Computing (Allerton)}, Oct. 2013, pp. 282--292.

\bibitem{metha1967random}
M.~Mehta, \emph{Random matrices and the statistical theory of energy
  level}.\hskip 1em plus 0.5em minus 0.4em\relax Academic Press, 1967.

\bibitem{mezzadri2006generate:2006}
F.~Mezzadri, ``How to generate random matrices from the classical compact
  groups,'' \emph{arXiv preprint math-ph/0609050}, 2006.

\bibitem{edelman2005random}
A.~Edelman and N.~R. Rao, \emph{Random matrix theory}.\hskip 1em plus 0.5em
  minus 0.4em\relax Cambridge University Press, 2005, vol.~14.

\bibitem{KorkinZolotarev:Lagarias1990}
J.~C. Lagarias, H.~W. Lenstra~Jr, and C.-P. Schnorr, ``{K}orkin-{Z}olotarev
  bases and successive minima of a lattice and its reciprocal lattice,''
  \emph{Combinatorica}, vol.~10, no.~4, pp. 333--348, 1990.

\bibitem{Banaszczyk}
W.~Banaszczyk, ``New bounds in some transference theorems in the geometry of
  numbers,'' \emph{Mathematische Annalen}, vol. 296, no.~1, pp. 625--635, 1993.

\bibitem{minimumValueOfQuadraticForms:Blichfeldt1929}
H.~F. Blichfeldt, ``The minimum value of quadratic forms, and the closest
  packing of spheres,'' \emph{Mathematische Annalen}, vol. 101, no.~1, pp.
  605--608, 1929.

\bibitem{carlson1966some}
B.~C. Carlson, ``Some inequalities for hypergeometric functions,''
  \emph{Proceedings of the American Mathematical Society}, vol.~17, no.~1, pp.
  32--39, 1966.

\bibitem{tee2004surface}
G.~J. Tee, ``Surface area and capacity of ellipsoids in n dimensions,''
  \emph{New Zealand Journal of Mathematics}, vol.~34, pp. 165--198, 2005.

\bibitem{sakzad:2013complex}
A.~Sakzad, J.~Harshan, and E.~Viterbo, ``On complex {LLL} algorithm for integer
  forcing linear receivers,'' in \emph{2013 Australian Communications Theory
  Workshop (AusCTW)}, Jan. 2013, pp. 13--17.

\bibitem{Fischer:2016}
R.~F. Fischer, M.~Cyran, and S.~Stern, ``Factorization approaches in
  lattice-reduction-aided and integer-forcing equalization,'' in \emph{Proc.
  Int. Zurich Seminar on Communications, Zurich, Switzerland}, 2016.

\bibitem{khina:2015}
A.~Khina, I.~Livni, A.~Hitron, and U.~Erez, ``Joint unitary triangularization
  for {G}aussian multi-user {MIMO} networks,'' \emph{IEEE Transactions on
  Information Theory}, vol.~61, no.~5, pp. 2662--2692, May 2015.

\bibitem{NarulaTrottWornell:1999}
A.~Narula, M.~D. Trott, and G.~W. Wornell, ``Performance limits of coded
  diversity methods for transmitter antenna arrays,'' \emph{IEEE Transactions
  on Information Theory}, vol.~45, no.~7, pp. 2418--2433, Nov. 1999.

\bibitem{OrdentlichErez2012}
O.~Ordentlich and U.~Erez, ``A simple proof for the existence of ``good'' pairs
  of nested lattices,'' \emph{IEEE Transactions on Information Theory},
  vol.~62, no.~8, pp. 4439--4453, Aug. 2016.

\end{thebibliography}
 \begin{IEEEbiographynophoto}{Elad Domanovitz}
 received the B.Sc. degree (cum laude) and the M.Sc. degree in 2005 and 2011, respectively, in electrical engineering
 from Tel Aviv University, Israel. He is currently working toward the Ph.D. degree at Tel Aviv University.
 \end{IEEEbiographynophoto}
 \begin{IEEEbiographynophoto}{Uri Erez}
 (M'09) was born in Tel-Aviv, Israel, on October 27, 1971.
 He received the B.Sc. degree in mathematics and physics and the M.Sc. and Ph.D. degrees in electrical engineering from Tel-Aviv University in 1996, 1999, and 2003, respectively. During 2003-2004, he was a Postdoctoral
 Associate at the Signals, Information and Algorithms Laboratory at the Massachusetts Institute of Technology (MIT), Cambridge. Since 2005, he has been with the Department of Electrical Engineering-Systems at Tel-Aviv
 University. His research interests are in the general areas of information theory and digital communication. He served in the years 2009-2011 as Associate Editor for Coding Techniques for the IEEE TRANSACTIONS ON INFORMATION THEORY.\end{IEEEbiographynophoto}

\end{document}